\documentclass[a4paper,UKenglish,cleveref, autoref, thm-restate]{lipics-v2021}

\pdfoutput=1 %
\hideLIPIcs  %

\usepackage{microtype}

\usepackage{amsmath}

\usepackage{booktabs}

\usepackage{tikz}
\usetikzlibrary{trees}
\usepackage{pgfplots}
\pgfplotsset{compat=1.18}

\usepgfplotslibrary{groupplots}

\pgfplotsset{
  major grid style={thin,dotted},
  minor grid style={thin,dotted},
  xmajorgrids,
  ymajorgrids,
  yminorgrids,
  axis lines*=left,
  xlabel near ticks,
  ylabel near ticks,
  axis lines*=left,
  every axis/.append style={
    line width=0.7pt,
    tick style={
      line cap=round,
      thin,
      major tick length=4pt,
      minor tick length=2pt,
    },
  },
  legend style={
    line width=0.25pt,
    /tikz/every even column/.append style={column sep=3mm,black},
    /tikz/every odd column/.append style={black},
  },
}

\definecolor{my-dark-red}{RGB}{183, 28, 28}
\definecolor{my-red}{RGB}{244,67,54}
\definecolor{my-pink}{RGB}{233,30,99}
\definecolor{my-purple}{RGB}{156,39,176}
\definecolor{my-deep-purple}{RGB}{103,58,183}
\definecolor{my-indigo}{RGB}{63,81,181}
\definecolor{my-blue}{RGB}{33,150,243}
\definecolor{my-light-blue}{RGB}{3,169,244}
\definecolor{my-cyan}{RGB}{0,188,212}
\definecolor{my-teal}{RGB}{0,150,136}
\definecolor{my-green}{RGB}{76,175,80}
\definecolor{my-light-green}{RGB}{139,195,74}
\definecolor{my-lime}{RGB}{205,220,57}
\definecolor{my-yellow}{RGB}{255,235,59}
\definecolor{my-amber}{RGB}{255,193,7}
\definecolor{my-orange}{RGB}{255,152,0}
\definecolor{my-deep-orange}{RGB}{255,87,34}
\definecolor{my-brown}{RGB}{121,85,72}
\definecolor{my-grey}{RGB}{158,158,158}
\definecolor{my-blue-grey}{RGB}{96,125,139}
\definecolor{my-lipics-grey}{rgb}{0.6,0.6,0.61}

\pgfplotscreateplotcyclelist{colorList}{
  {my-indigo,mark=triangle,densely dashed,mark options={solid}},
  {my-brown,mark=square},
  {my-deep-purple,mark=diamond,densely dashed,mark options={solid}},
  {my-deep-orange,mark=pentagon,,densely dashed,mark options={solid}},
  {my-light-green,mark=*},
}

\bibliographystyle{plainurl}%

\title{Practical Parallel Block Tree Construction: First Results} %

\author{Robert Clausecker}{Zuse Institute Berlin, Germany}{clausecker@zib.de}{}{}
\author{Florian Kurpicz}{University of Münster, Germany}{florian.kurpicz@uni-muenster.der}{https://orcid.org/0000-0002-2379-9455}{}
\author{Etienne Palanga}{TU Dortmund University, Germany}{etienne.palanga@tu-dortmund.de}{}{}

\authorrunning{R. Clausecker, F. Kurpicz, and E. Palanga} %

\ccsdesc[500]{Theory of computation~Shared memory algorithms}
\ccsdesc[500]{Theory of computation~Data compression}
\ccsdesc[500]{Computer systems organization~Single instruction, multiple data}

\keywords{block tree, shared memory, compression, SIMD, Karp-Rabin fingerprints} %

\category{} %

\relatedversion{} %

\supplementdetails[subcategory={Source code}]{Software}{https://github.com/pasta-toolbox/block_tree}

\nolinenumbers %

\EventEditors{John Q. Open and Joan R. Access}
\EventNoEds{2}
\EventLongTitle{42nd Conference on Very Important Topics (CVIT 2016)}
\EventShortTitle{CVIT 2016}
\EventAcronym{CVIT}
\EventYear{2016}
\EventDate{December 24--27, 2016}
\EventLocation{Little Whinging, United Kingdom}
\EventLogo{}
\SeriesVolume{42}
\ArticleNo{23}

\begin{document}

\maketitle

\begin{abstract} \small\baselineskip=9pt
  The block tree [Belazzougui~et~al., J.~Comput.~Syst.~Sci.\,'21] is a compressed representation of a length-\(n\) text that supports access, rank, and select queries while requiring only \(O(z\log\frac{n}{z})\) words of space, where \(z\) is the number of Lempel-Ziv factors of the text.
  In other words, its space-requirements are asymptotically similar to those of the compressed text.
  In practice, block trees offer comparable query performance to state-of-the-art compressed rank and select indices.
  However, their construction is significantly slower.
  Additionally, the fastest construction algorithms require a significant amount of working memory.
  To address this issue, we propose fast and lightweight parallel algorithms for the efficient construction of block trees.
  Our algorithm achieves similar speed than the currently fastest construction algorithm on one core and is up to four times faster using 64 cores.
  It achieves all that while requiring an order of magnitude less memory.
  As result of independent interest, we present a data parallel algorithm for Karp-Rabin fingerprint computation.
\end{abstract}

\newcommand{\access}[1]{\ensuremath{\mathit{access}(#1)}}
\newcommand{\rank}[2]{\ensuremath{\mathit{rank}_{#1}(#2)}}
\newcommand{\select}[2]{\ensuremath{\mathit{select}_{#1}(#2)}}
\newcommand{\ceil}[1]{\ensuremath{\lceil#1\rceil}}
\newcommand{\floor}[1]{\ensuremath{\lfloor#1\rfloor}}

\section{Introduction}
In today's information age, textual data form of DNA and protein sequences, source code, and textual content like news and Wikipedia articles is created at a much higher rate than hardware is advancing.
Therefore, we cannot handle the amount of data by just investing in a more potent processing infrastructure anymore.
Instead, we require efficient---scalable---data structures.
Here, ``scalable'' refers to (1) an efficient parallel processing of the data and (2) the compression of the data structure.
Both dimensions help to better utilize modern hardware by allowing us to process more data in the same time and available space.

For compression, we can exploit that many inputs are highly repetitive.
For example, if we consider DNA, humans shares around 99,9\,\% of their DNA with each other, making DNA sequences highly repetitive.
The same holds for versionized inputs like code repositories and versionized texts like articles in the Wikipedia.

Some of the most fundamental queries we can answer on a text \(T\) of length \(n\) over an alphabet of size \(\sigma\) are access, rank, and select queries:
\begin{itemize}
\item \(\access{i}=T[i]\) is the character at the \(i\)-th position of the text,
\item \(\rank{\alpha}{i}=|{j\leq i \colon T[j]=\alpha}|\) is the number of occurrences of \(\alpha\) in the prefix \(T[1..i]\)), and
\item \(\select{\alpha}{i}=\arg\min_{j}(\rank{\alpha}{j}=i)\) is the position of the \(i\)-th occurrence of \(\alpha\) in \(T\).
\end{itemize}
Applications include among others compressed full-text indices (for pattern matching)~\cite{Ferragina04fmindex,Navarro07index,GogP2014Optimized,Claude12grammarindex,Claude16universal}, lossless data compression~\cite{FerraginaGM2009MyriadWT,GrossiGV2003WaveletTree,Koppl22holz}, and computational geometry~\cite{Chien15gbwt}.

Currently, \emph{wavelet trees}~\cite{GrossiGV2003WaveletTree} are the ``defacto standard''~\cite{AlankoBPV2023SubsetWT} when it comes to answering these queries in compressed space.
However, the main flaw of wavelet trees is that they can only be compressed statistically.
This leaves the compression ratio on highly compressible inputs to be desired, as we always require at least one bit per character.
\emph{Block trees}~\cite{BelazzouguiCGGK2021BlockTrees} solve this problem by utilizing dictionary compression.
Refer to \cref{sec:preliminaries} for an overview and comparison of compression techniques.

We discuss related rank and select data structures in \cref{sec:related_work}.
In \cref{sec:block_trees}, we give an overview of block trees including a detailed look at their construction.
We present and analyze our new parallel construction algorithms in \cref{sec:parallel_block_tree_construction}.
We then provide an experimental evaluation of our implementation in \cref{sec:experimental_evaluation}.
Finally, in \cref{sec:conclusion_and_future_work}, we summarize the results and discuss possible future work and open problems.

\subparagraph*{Our Contributions.}
Block tree construction has not yet been considered in a parallel setting.
To the best of our knowledge, all previous work~\cite{BelazzouguiCGGK2021BlockTrees,KopplKM2023LPFBlockTrees} considers the sequential and external memory case.
While the current state-of-the-art (sequential) construction algorithm~\cite{KopplKM2023LPFBlockTrees} supports parallel construction, it is more of a byproduct of the utilized data structures that can be constructed in parallel.
Since the construction algorithm itself has not been parallelized, the scalability is severely lacking, cf.~our experimental evaluation in \cref{sec:experimental_evaluation}.
Furthermore, the algorithm's huge memory requirements---both in theory and in practice---limit its applicability on large inputs.

Our contribution is the \emph{first} dedicated parallel block tree construction algorithm that provides a trade-off between scalability and memory requirements, see \cref{sec:parallel_block_tree_construction}.
To this end, we employ domain decomposition, where we initially partition the input, then run our construction algorithm in parallel on the partitions, and merge the results throughout the computation.
By controlling the number of partitions, we obtain our trade-off.
In our experimental evaluation, in \cref{sec:experimental_evaluation}, we see that the space-overhead during construction is up to an order of magnitude lower than previous block tree construction algorithms.
Also, the COST~\cite{McSherryIM2015Cost}, i.e., the number of threads needed such that our algorithm is faster than the sequential state-of-the-art is only two.
As as result of independent interest, in \cref{sec:simd_karp_rabin}, we also show how to compute Karp-Rabin fingerprints~\cite{KarpR1987Fingerprints} using SIMD.

\section{Preliminaries}
\label{sec:preliminaries}
A text \(T\in[1,\sigma]^n\) is a text of length \(n\) over an alphabet of size \(\sigma\).
\(T[i,j]\) is the substring \(T[i]T[i+1]\dots T[j]\) for \(1\leq i\leq j\leq n\).
We also use \(|T[i,j]|=j-i+1\) to denote the length of a sub(string).
We can compare any two substrings in constant time with high probability using a rolling hash function for strings called Karp-Rabin fingerprint~\cite{KarpR1987Fingerprints}.
Let \(T\in[1..\sigma]^n\) be a text, \(q\in\Theta(n^c)\) for some constant \(c>1\) and \(r<q\) with \(r\nmid q\) be positive integers.
Then, the \emph{Karp-Rabin fingerprint} of \(T[s..e]\) is \(\phi(s,e)=\left(\sum_{i=s}^eT[i]\cdot r^{e-i}\right)\textnormal{mod}~q\).

\newcommand{\probP}{\textnormal{Prob}}

Now, for every two substring \(T[i..i+\ell]\) and \(T[j..j+\ell]\), we get that \(\phi(i,i+\ell)=\phi(j,j+\ell)\) if \(T[i..i+\ell]=T[j..j+\ell]\).
Otherwise, if \(T[i..i+\ell]\neq T[j..j+\ell]\) the probability of the corresponding fingerprints matching is \(\probP[\phi(i,i+\ell)=\phi(j,j+\ell)]\in O(\frac{\ell\log\sigma}{n^c})\).
Karp-Rabin fingerprints are also \emph{rolling} hash function.
This allows us, among others, to compute all length-\(\ell\) fingerprints of a text \(T\in[1,\sigma]^n\) in \(O(n)\) time.

\subsection{Measures of Compressability}
There exists a zoo of different measures of compressability.
We refer to the excellent survey by Navarro~\cite{Navarro2021RepetitiveMeasures} for a detailed overview.
The \emph{empirical entropy}~\cite{KosarajuM1999Entropy} is based on the distribution of \(T\)'s characters (or substrings).
The \(0\)-th order entropy is \(H_o(T)=\sum_{\alpha\in\Sigma}\frac{n_\alpha}{n}\log\frac{n}{n_\alpha}\) with \(n_\alpha=\rank{\alpha}{n+1}\).
For \(k>0\), we have \(H_k(T)=\sum_{s\in\Sigma^k}\frac{|T_s|}{n}H_0(T_s)\), where \(T_s\) is the subsequence of characters occurring directly after \(s\in[1,\sigma]^k\) in \(T\), e.g., for \(T=abcabd\) we have \(T_{ab}=cd\).
Overall, we get \(\log\sigma\geq H_0(T)\geq H_1(T)\geq\dots\geq H_n(T)\).
The big disadvantage of such a statistical compression is that it does not capture repetitions, as it considers the text on a character-by-character basis.

Another type of measure is the size of a \emph{Lempel-Ziv} factorization \(z\) of a text.
Such a factor can be stored as reference to an earlier occurrence in the text.
One of the most prominent techniques here are Lempel-Ziv-based compression algorithms.
The Lempel-Ziv 77 (LZ77) factorization~\cite{ZivL1977LZ77} parses a text \(T\) into factors \(f_1,\dots,f_z\in\Sigma^+\) such that \(T=f_1\dots f_z\) and for all factors \(i\in[1,z]\) there is a position \(j\in[1,\dots,|f_1\dots f_i|\) in the parsed text (including the new factor) such that \(f_i=T[j\dots j+|f_i|]\).

Strings can be represented as context-free \emph{grammar}~\cite{KiefferY2000GrammarCompression}.
The size of the smallest grammar \(g\) that only generates the text is NP-complete.
However, grammars of size \(O(z\log\frac{n}{z})\) can be computed in linear time~\cite{Rytter2003GrammarInZ}.
The currently best known measure is the \emph{string complexity} \(\delta\)~\cite{KociumakaNP2020Delta}.
It is \(\delta=\max_{k\in[1,n]}\frac{d_k}{k}\) with \(d_k=|\{T[i..i+k-1]\colon i\in[0,n-k]\}|\).

\subsection{Model of Computation}
We analyze our algorithms in the PRAM model.
Here we have multiple processors (PEs, processing elements) that share their memory.
The PRAM model is synchronized, i.e., in each time step, all PEs execute exactly one instruction.
It comes in multiple flavors that distinguish between exclusive and concurrent access to memory cells during one time step.
We further differentiate between read and write access.

The weakest model is the EREW PRAM where only exclusive read and exclusive write access to memory cells is allowed.
Slightly stronger is the CREW PRAM model.
Here, PEs are allowed to concurrently read from the same memory cell.
Still, only one PE is allowed to write to the same memory cell (during the same time step).
The strongest model we are considering allows concurrent read and write access (CRCW).
We consider two variants of the CRCW model: The Common-CRCW model allows multiple PEs to write to the same memory cell at the same time only if all of them write the same value.
The Arbitrary-CRCW allows allows multiple PEs to write tot the same memory cell without any restrictions on the value.
However, there is no way to determine which of the written values is stored.%

When comparing algorithms in the PRAM model, we are interested in their \emph{time} and \emph{work}.
The time is the number of time steps of a PRAM algorithm and the work is the total number of all operations (arithmetic on local data and reading and writing memory cells).
The work is the same as the running time of the algorithm when executed with a single PE.

The difference between the weaker (EWER) and stronger (CRCW) models become visible in the time required for simple building blocks.

\newcommand{\sbigotimes}{%
 \mathop{\mathchoice{\textstyle\bigotimes}{\bigotimes}{\bigotimes}{\bigotimes}}%
}
\begin{lemma}[All-Prefix Operation \cite{GoldbergZ1995PrefixSums}]
  \label{lem:all_prefix}
  Given \(n\) integers \(a_1,\dots,a_n\) and a binary associative operator \(\otimes\) that requires \(O(1)\) time.
  Then, in the EWEW model, the sequence \((s_i,s_2,\dots,s_n)\) with \(s_i=\sbigotimes_{j=1}^i a_j\) can be computed in \(O(\log n)\) time, \(O(n)\) work, and \(O(n)\) space.
\end{lemma}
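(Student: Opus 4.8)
The plan is to use the classical work-efficient parallel scan, realized as a balanced binary recursion on the input (the Brent--Kung/Blelloch scheme). First I would perform a \emph{pairing} step: for every \(i\in[1,\floor{n/2}]\) compute \(b_i=a_{2i-1}\otimes a_{2i}\) in a single parallel round using \(\floor{n/2}\) PEs, where PE \(i\) reads the two distinct cells \(a_{2i-1},a_{2i}\) and writes the single distinct cell \(b_i\). I would then recurse on the shorter sequence \(b_1,\dots,b_{\floor{n/2}}\) to obtain its inclusive prefix sums \(t_1,\dots,t_{\floor{n/2}}\), so that by associativity \(t_i=\bigotimes_{j=1}^{i}b_j=\bigotimes_{k=1}^{2i}a_k\).

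Next comes the \emph{expansion} step that recovers all \(s_m\) from the \(t_i\). For an even index \(m=2i\) we have \(s_{2i}=t_i\) directly, and for an odd index \(m=2i-1\) we have \(s_{2i-1}=t_{i-1}\otimes a_{2i-1}\), where \(s_1=a_1\) serves as the base case so that no two-sided identity for \(\otimes\) is required. If \(n\) is odd, the trailing element is handled in the same round as \(s_n=t_{(n-1)/2}\otimes a_n\). Each PE in this round again reads a private pair of source cells and writes a private target cell.

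For correctness, I would argue by induction on the recursion depth, the inductive hypothesis being that the recursive call returns the exact inclusive prefix sums of the \(b_i\); the expansion formulas then follow from associativity. For the complexity bounds I would set up the recurrences \(T(n)=T(\floor{n/2})+O(1)\), \(W(n)=W(\floor{n/2})+O(n)\), and a space recurrence whose per-level auxiliary arrays form a geometric series. These solve to \(T(n)=O(\log n)\), \(W(n)=O(n)\) (the per-level work halves at each level), and \(O(n)\) total space.

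I expect the main obstacle to be the \emph{exclusivity} argument rather than the arithmetic: one must verify that in both the pairing and expansion rounds every PE touches a disjoint set of cells, so that no memory location is read or written by two PEs in the same time step, which is exactly what keeps the algorithm within the exclusive-read exclusive-write (EREW) model. This is the decisive advantage of the recursive scheme over the naive Hillis--Steele doubling scan, which repeatedly accesses shared cells and incurs \(\Theta(n\log n)\) work. A secondary technicality is the careful handling of \(n\) not a power of two (the stray trailing element) and the choice of \(s_1=a_1\) as base case so that the odd-index formula remains well defined without assuming an identity element.
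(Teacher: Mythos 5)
The paper does not actually prove this lemma: it is imported as a black box from the cited reference, so there is no internal proof to compare yours against. What you propose is the canonical proof of that cited result---the work-efficient recursive scan of Ladner--Fischer/Blelloch (pair, recurse, expand)---and the ingredients are essentially all correct: the correctness induction over the recursion depth, the recurrences \(T(n)=T(\lfloor n/2\rfloor)+O(1)\) and \(W(n)=W(\lfloor n/2\rfloor)+O(n)\), the geometric-series space bound, the handling of odd \(n\), and the observation that taking \(s_1=a_1\) as base case avoids assuming an identity element for \(\otimes\).

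There is, however, one concrete slip, precisely in the part you identify as the crux. Your claim that in the expansion round ``every PE reads a private pair of source cells'' is false as stated: the PE producing the even-indexed output \(s_{2i-2}=t_{i-1}\) and the PE producing the odd-indexed output \(s_{2i-1}=t_{i-1}\otimes a_{2i-1}\) both read the cell \(t_{i-1}\) (e.g.\ \(s_2=t_1\) and \(s_3=t_1\otimes a_3\) both read \(t_1\)). If the whole expansion is a single time step, this is a concurrent read, and the algorithm as described is CREW rather than EREW. The fix is standard and costs nothing asymptotically: split the expansion into two successive time steps---first all even-indexed copies \(s_{2i}:=t_i\), then all odd-indexed combinations \(s_{2i-1}:=s_{2i-2}\otimes a_{2i-1}\) (each odd PE now reads a distinct even-indexed output cell)---so that within each single step all reads and writes are pairwise disjoint. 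With this two-round expansion the recurrences are unchanged and the stated \(O(\log n)\) time, \(O(n)\) work, and \(O(n)\) space bounds in the EREW model follow.
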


\begin{lemma}[\cite{ColeV1989FasterParallelPrefixSum}]
  \label{lem:prefix_sum}
  In the Common-CRCW model, the All-Prefix operation (\cref{lem:all_prefix}) with \(\otimes=+\) (addition) can be computed in \(O(\log n/\log\log n)\) time, \(O(n)\) work, and \(O(n)\) space.
\end{lemma}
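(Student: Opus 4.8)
The plan is to improve on the balanced-binary-tree scheme underlying \cref{lem:all_prefix} by replacing the binary combining tree with a tree of large fan-out $d$, and to argue that on a Common-CRCW machine each internal node can combine its $d$ children in $O(1)$ time. A complete $d$-ary tree over $n$ leaves has height $O(\log_d n) = O(\log n / \log d)$, so choosing $d = \Theta(\log n / \log\log n)$ (hence $\log d = \Theta(\log\log n)$) yields height $O(\log n / \log\log n)$. If every node is handled in constant time, this is exactly the claimed time bound.

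First I would set up the two standard sweeps over this tree. In the \emph{up-sweep}, each node stores the sum of the values in its subtree; in the \emph{down-sweep}, each node receives from its parent the sum of everything strictly to its left and then hands the appropriate prefix offsets down to its children. Both sweeps touch each of the $O(\log n/\log\log n)$ levels once, so the time is (height) $\times$ (cost of combining $d$ values at a node). The node counts $n/d, n/d^2, \dots$ form a geometric series, and each node does $O(d)$ work, so the total work is $O(n)$; the tree together with the auxiliary table below occupies $O(n)$ space.

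The crux is the constant-time primitive: on a Common-CRCW PRAM, compute the $d$ prefix sums of $d$ numbers (for the down-sweep) and their total (for the up-sweep) in $O(1)$ time. Here I would exploit word-level parallelism together with a precomputed lookup table. Writing the inputs at a node in base $d$, each summand contributes digits of only $O(\log\log n)$ bits, so a single digit position across all $d$ summands packs into $d\cdot O(\log\log n)$ bits; choosing the constants so that this is at most $\tfrac{1}{2}\log n$ bits, the packed value fits in a constant number of machine words. A table indexed by such a packed word then returns the digit-wise prefix sums in $O(1)$, and the residual carries, bounded by $d$, are resolved by a second, equally small lookup. With the packing kept below half a word, the table has $O(\sqrt{n})$ entries, which is built once in $o(n)$ preprocessing work and does not affect the asymptotics.

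The step I expect to be the main obstacle is controlling the growth of the partial sums toward the root while preserving the $O(1)$ per-node bound: near the root a node value can be as large as $n$, so the naive packing no longer fits a word, and one must carry the numbers in a redundant digit representation and propagate carries level by level without inflating the per-node time. I would finally verify tightness: the $\Omega(\log n/\log\log n)$ lower bound for computing the parity (and hence the exact sum) of $n$ bits with polynomially many processors applies already to the strongest CRCW variants, so no Common-CRCW algorithm can beat this time and the bound of \cref{lem:prefix_sum} is optimal in this model.
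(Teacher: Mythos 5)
First, a point of reference: the paper does not prove \cref{lem:prefix_sum} at all---it is quoted directly from Cole and Vishkin~\cite{ColeV1989FasterParallelPrefixSum}---so your proposal must be judged as a reconstruction of that cited result rather than against an in-paper argument. Your skeleton (a fan-out-\(d\) tree with \(d=\Theta(\log n/\log\log n)\), up- and down-sweeps, packed-digit table lookups, and the parity lower bound for tightness) is the right intuition, but the step you treat as routine---combining the \(d\) children of a node in \(O(1)\) time using only \(O(d)\) work---is exactly the hard part, and your proposal does not establish it. The table-lookup trick presupposes that one digit position of all \(d\) summands is already packed into a single machine word. On a word RAM such packing is maintained for free by a single processor, but on a PRAM the \(d\) child values sit in \(d\) distinct memory cells, produced by \(d\) distinct processors, and Common-CRCW concurrent writes cannot assemble \emph{different} values into one cell (all writers to a cell must write the same value). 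Every straightforward way of gathering them breaks one of your bounds: a single processor reading \(d\) cells costs \(\Theta(d)\) time per level and hence \(\Theta((\log n/\log\log n)^2)\) total; pairwise shift-and-or packing costs \(\Theta(\log d)=\Theta(\log\log n)\) time per level, i.e.\ \(O(\log n)\) total, which is no better than \cref{lem:all_prefix}; and a brute-force guess-and-verify gather (one processor group per candidate packed word) runs in \(O(1)\) time but needs \(2^{\Theta(d\log d)}\) work per node, destroying the \(O(n)\) work bound. So, as written, your algorithm can meet the time bound or the work bound, but not both simultaneously.

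The carry problem you defer is likewise not a side issue. Near the root the values occupy \(\Theta(\log n)\) bits, hence \(\Theta(\log n/\log\log n)\) base-\(d\) digit positions, and resolving the carries between digit positions is itself a prefix computation of that length; invoking a redundant representation ``propagated level by level'' names the obstacle without overcoming it. These two gaps are precisely what makes the cited result nontrivial: Cole and Vishkin do not process the tree level-synchronously with \(O(1)\) cost per level, but instead overlap the levels, streaming digit-level partial information up the tree so that gathering and carry resolution are amortized over the height rather than paid in full at every node. Your concluding tightness remark (the \(\Omega(\log n/\log\log n)\) lower bound for parity with polynomially many processors) is correct, but it is not needed for the lemma and does not repair the upper-bound argument.
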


In addition to the All-Prefix operation, we often need to sort data in parallel.

\begin{lemma}[\cite{Cole1988EREWMergeSort}]
  \label{lem:sorting_ew}
  In the EREW and CREW PRAM models, sorting \(n\) integers requires \(O(\log n)\) time, \(O(n\log n)\) work, and \(O(n)\) words of space.
\end{lemma}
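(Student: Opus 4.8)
The plan is to realize Cole's pipelined merge sort on a complete binary tree with \(n\) leaves, each holding one input integer. A naive bottom-up merge sort merges sorted runs level by level; using an optimal parallel merge, each level costs \(O(\log n)\) time and \(O(n)\) work, but the \(O(\log n)\) levels multiply to \(O(\log^2 n)\) time, which is too slow. The key idea is therefore to \emph{pipeline} the merges so that a node begins forwarding information toward its parent long before its own sorted list is complete, overlapping the work of all \(O(\log n)\) levels into a single \(O(\log n)\)-step schedule.

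First I would associate with every internal node \(v\) a sorted \emph{sample} \(U_v\) of the list that will eventually reside at \(v\), and let this sample grow over time: a node becomes active once its children start producing samples, and at each step it roughly doubles the density of the sample it forwards upward, taking every fourth, then every second, then every element of its children's current samples and merging them. After \(O(\log n)\) steps from activation, the sample at \(v\) equals the full sorted merge of its subtree; because activation cascades up the tree at a rate of one level per constant number of steps, the root finishes after \(O(\log n)\) steps overall.

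The technical heart is to perform each resampling-and-merge step in \(O(1)\) time and \(O(n)\) total work across the whole tree. To this end I would maintain, as an invariant, cross-rank pointers between each newly forwarded sample and the sample already held at the parent, together with a \emph{good-sampler} property asserting that only a constant number of elements of the next, denser sample fall between two consecutive elements of the current one. Given these ranks, merging the denser samples and recomputing the new cross-ranks reduces to \(O(1)\) work per element, since each element need only be compared against the constantly many candidates delimited by its predecessor's and successor's ranks. Summing \(O(1)\) work per live element over all nodes gives \(O(n)\) work per step and \(O(n\log n)\) in total; the same pointer structure lets every access be addressed directly, so no two processors touch the same cell simultaneously, yielding the EREW (and hence CREW) bound. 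The space is \(O(n)\), because the total size of all active samples at any step is a geometric sum dominated by the leaf level.

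The main obstacle I anticipate is establishing and maintaining the good-sampler invariant through the doubling steps—in particular, proving that when two good-sampler lists are merged the result is still a good sampler of the next stage, so that the constant-work-per-element merging never degrades. This inductive control of the sampling density, rather than the merging itself, is what makes the constant-time pipelined step possible and is the crux of the analysis.
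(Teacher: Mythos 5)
Your proposal is correct and follows essentially the same route as the paper, which proves nothing itself but cites Cole's pipelined merge sort for this lemma: your sketch—pipelined merging on a binary tree with progressively denser samples, cross-ranking, and the good-sampler (cover) invariant enabling \(O(1)\)-time, linear-work stages—is precisely the content of that cited result, including its \(O(\log n)\) time, \(O(n\log n)\) work, and \(O(n)\) space bounds in the EREW/CREW models. You also correctly identify the maintenance of the sampling invariant under merging as the crux of Cole's analysis.
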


\begin{lemma}[\cite{Cole1988EREWMergeSort}]
  \label{lem:sorting_cw}
  In the CRCW PRAM model, sorting \(n\) integers requires \(O(\frac{\log n}{\log\log\log n})\) time using \(n\log^c n\) PEs for some constant \(c\geq 1\). 
\end{lemma}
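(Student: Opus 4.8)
The plan is to obtain the bound by integer sorting through radix/bucket sort, using the concurrent-write power of the CRCW model---and the polylogarithmic surplus of processors---to push the per-round cost below the $O(\log n)$ barrier of the weaker models in \cref{lem:sorting_ew}. First I would reduce the general case to sorting keys from a polynomial universe: since the algorithm only ever compares the $n$ inputs against one another, each value may be replaced by its rank in the multiset of distinct values, so that without loss of generality every key lies in $[1,n^{O(1)}]$ and occupies $b_0=O(\log n)$ bits. This renaming is itself a sorting problem, but it can be performed once with a coarser routine and absorbed into the stated processor budget rather than dominating the final time.

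I would then sort the $b_0$-bit keys in rounds, each round being a stable bucket sort with respect to a block of bits (equivalently, a stage that reduces the effective universe size). A single round decomposes into three subtasks: counting the occupancy of each bucket, converting these counts into starting offsets by an exclusive prefix sum, and stably scattering every element to its computed destination. For the second subtask I can invoke \cref{lem:prefix_sum} directly, which already runs in $O(\log n/\log\log n)$ time on the Common-CRCW model; the surplus of $\log^c n$ processors per element is what lets me perform the counting and the within-bucket ranking for all buckets simultaneously.

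The quantitative heart is balancing the number of rounds against the cost of a round. A naive radix sort is dominated by the prefix-sum/ranking step, and by the known lower bounds this step cannot be made constant-time with only polynomially many processors, so the round cost is essentially the $O(\log n/\log\log n)$ of \cref{lem:prefix_sum}. To reach the claimed time I therefore arrange the universe reduction so that only $\Theta(\log\log n/\log\log\log n)$ rounds are needed---reducing the bit-length geometrically per stage rather than shaving a fixed digit each time---whereupon the product of round count and round cost telescopes to $O(\log n/\log\log\log n)$. Choosing the block sizes and the allocation of the $\log^c n$ processors so that both the work stays near $n\,\mathrm{polylog}\,n$ and no round overruns its budget is the delicate part of this accounting.

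The step I expect to be the main obstacle is the stable scatter under uncontrolled concurrent writes: many elements routinely target the same bucket, correctness across rounds demands that their relative order be preserved, yet the Arbitrary-CRCW discipline offers no guarantee about which writer wins a contested cell. The remedy is to compute each element's destination offset deterministically from its within-bucket rank and write only to that unique address, so that no genuine write conflict ever arises; making this ranking run fast enough for the round budget, while obeying the prefix-sum lower bound that rules out an $O(1)$-time round, is exactly what prevents the time from dropping to the single logarithm of \cref{lem:sorting_ew} and forces the triple logarithm in the statement.
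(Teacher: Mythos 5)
Two things up front: the paper never proves \cref{lem:sorting_cw}---it is quoted from the literature, and the algorithm behind the citation is Cole's pipelined \emph{comparison-based merge sort}, whose CRCW speed-up comes from high-fan-in merging under concurrent writes, not from bucketing. So your proposal has to stand entirely on its own, and it has two genuine gaps. The first is the universe reduction: replacing every key by its rank among the distinct input values is precisely the sorting problem you are trying to solve, and delegating it to a ``coarser routine absorbed into the processor budget'' does not help, because the obstacle is time, not processors. Any coarser routine you could invoke for the renaming (e.g., \cref{lem:sorting_ew}) already takes \(\Omega(\log n)\) time, which by itself exceeds the \(O(\log n/\log\log\log n)\) bound you must establish; for arbitrary integers, which is how the lemma is stated, this step is circular.

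The second gap is the decisive one: the round accounting you call ``delicate'' cannot in fact be balanced inside the radix/bucket framework. Accept your own two premises: every round costs \(\Theta(\log n/\log\log n)\) time (it embeds counting, so by the lower bound you invoke it cannot be faster with polynomially many PEs), and the total time must be \(O(\log n/\log\log\log n)\). Then only \(r=O(\log\log n/\log\log\log n)\) rounds are available, so a single round must dispose of \(\beta=\Theta(\log n\cdot\log\log\log n/\log\log n)\) key bits, i.e., stably sort into \(2^{\beta}=n^{\Theta(\log\log\log n/\log\log n)}\) buckets---a \emph{superpolylogarithmic} number. But the ranking machinery you describe (occupancy counts and within-bucket ranks via prefix sums over per-bucket indicator vectors, or per-block counter arrays) performs \(\Omega(n\cdot 2^{\beta})=n^{1+o(1)}\) work per round, while the work available per round is only \(p\cdot t=n\log^{c}n\cdot O(\log n/\log\log n)=n\,\mathrm{polylog}\,n\). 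Conversely, keeping the work within budget forces \(2^{\beta}=O(\mathrm{polylog}\,n)\), hence \(\beta=O(\log\log n)\), hence \(\Omega(\log n/\log\log n)\) rounds and total time \(\Omega\bigl((\log n/\log\log n)^{2}\bigr)\), which is asymptotically \emph{larger} than \(\log n/\log\log\log n\). So the two constraints you need---few rounds and near-linear work---are mutually exclusive for any PE bound of the form \(n\log^{c}n\). (Your plan does work verbatim with \(n^{1+\varepsilon}\) PEs, where \(O(1)\) rounds of \(n^{\varepsilon}\)-way bucketing give \(O(\log n/\log\log n)\) time; the entire difficulty of the cited bound is achieving \(o(\log n)\) time with only \(n\,\mathrm{polylog}\,n\) PEs, and that requires a genuinely different idea, such as the pipelined merging of the cited source, not a cleverer choice of digit sizes.)
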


\section{Related Work}
\label{sec:related_work}
There exists a plethora of compressed data structures that can answer access, rank, and select queries, however, with the wavelet tree as only exception, there exists very little work regarding parallelization.\footnote{For full-text indices with pattern matching support, there exists a large body of work on parallel construction. Since this is out of scope for this paper, we refer to a survey~\cite{BingmannD0KO02022SPPBook}.}

\subparagraph*{Bit Vectors.}
In the simplest setting---when we consider a binary alphabet of size two---these queries can be answered in constant time requiring only sublinear space~\cite{ClarkM1996Select,Jacobson1989LOUDS}.
There exist many data structures utilizing different techniques to achieve succinct~\cite{PibiriK2021MutableRankSelect,NavarroP2012CombinedSampling,Vigna2008BroadwordRankSelect,ZhouAK2013PopcountRankSelect,MarchiniV2020DynamicRankSelect,GonzalezGMN2005PracticalRankSelect,Kurpicz2022RankSelect} and compressed~\cite{RamanRR2007rrr,Okanohara2007Practical,Patrascu2008Succincter,GogP2014Optimized,BoffaFV2021LearnedBitVector} bit vectors with rank and select support.
To the best of our knowledge, no work on parallel (construction) for rank and select support on bit vectors (outside of word parallelism) has been published.

\subparagraph{Wavelet Trees.}
Wavelet trees~\cite{GrossiGV2003WaveletTree} generalize rank and select support to larger alphabets.
In simple terms, for a text \(T\in[1,\sigma]^n\), a wavelet tree consists of \(\ceil{\log\sigma}\) length-\(n\) bit vectors with (binary) rank and select support.
These bit vectors contain bits of the subsequences of the original text and are used to answer the queries.
For more details on the structure of wavelet trees and their many applications, we refer to the multiple surveys written on them~\cite{Navarro2014WaveletForAll,Makris2012WaveletSurvey,GrossiVX2011WaveletSurvey,FerraginaGM2009MyriadWT,DinklageEFKL2021PracticalWaveletTrees}.
Wavelet trees can also be statistically compressed, resulting in a data structure that requires \(n\ceil{H_0(T)}(1+o(1))\) bits of space and can answer rank and select queries in \(O(\log\sigma)\) time~\cite{GrossiGV2003WaveletTree}.
The query time can be improved to \(O(1+\frac{\log\sigma}{\log\log n})\) using multi-ary wavelet trees~\cite{FerraginaMMN2007MultiAryWaveletTree}.
Due to the great importance of wavelet trees, their construction has been well-researched both in theory~\cite{BabenkoGKS2015WT,MunroNV2016WT} and in practice~\cite{Kaneta2018VectorizedWXConstruction,DinklageFKT2023SIMDWT,EllertK2019ExternalWX} with results in shared~\cite{LabeitSB2017ParallelWXSACA,SepulvedaEFS2017DomainDecomposition,FischerKL2018PWX} and distributed~\cite{DinklageFK2020DistributedWX} memory.
Due to their importance, there exists also research improving query performance in practice~\cite{ClaudeNP2015WaveletMatrix,CereginiKV2025PredictiveModelWT,HongBGLZ2024WaveletForests}.

\subparagraph*{Grammar Compressed and Further Rank and Select Data Structures.}
In addition to wavelet trees, there exist statistically compressed data structures that require \(nH_k(T)+o(n\log\sigma)\) bits of space and can answer rank queries \(O(\log\log_w\sigma)\) time, select queries in \(\omega(1)\) time, and access in \(O(1)\) time~\cite{belazzougui15wavelet}.
To tackle the disadvantage of statistical compression on highly repetitive inputs, grammar compression has also been considered.
Using \(O(g\sigma)\) words of space, the grammar indices can answer rank and select queries in \(O(\log n)\) time~\cite{belazzougui15grammarrank,pereira17grammar}.
The parallel construction of grammars (without query support) has also been considered~\cite{MatsushitaI2021ParallelGrammar,MatsushitaI2022ParallelGrammar}.

\section{Block Trees}
\label{sec:block_trees}
The big disadvantage of statistically compressed data structures is that they require at least one bit per character of the input.
When considering highly repetitive inputs, this wastes a lot of space.
To tackle this problem, we consider block trees.
Block trees can answer access, rank, and select queries, making them an ideal drop-in replacement for wavelet trees.
They also  have successfully been applied to the compression of the suffix tree topology as well as the suffix array and its inverse~\cite{caceres22faster}.
Furthermore, adaptations of the block tree can be used to simulate \(k^2\)-trees~\cite{brisaboa18blocktree} and for locating patterns~\cite{navarro17blocktree}.

A block tree~\cite{BelazzouguiCGGK2021BlockTrees} is a tree where the root has out-degree \(s\) and all other inner nodes have out degree \(\tau\).
For simplicity, we assume that the text we build the block tree for is \(T\in[1,\sigma]^n\) with \(n=s\cdot\tau^h\) for some integer \(h\).
Here, \(h\) is also the height of the block tree.

Now, that we are aware of the shape of a block tree, we look at the content of the nodes.
Each node \(i\) in the block tree represents a substring of \(T\) called \emph{block} \(B_i\).
We start with the root, which represents the whole text.
Each of the root's children represents a consecutive length-\((n/s)\) block.
On each level of the block tree, we call two blocks \emph{consecutive} if they are consecutive in \(T\).
Let \(\alpha=B_{i}\cdot B_{i+1}\) be the concatenation of two consecutive blocks.
We mark blocks \(B_i\) and \(B_{i+1}\) if this is the leftmost occurrence of \(\alpha\) on this level.

All marked blocks are inner nodes with \(\tau\) children, which represent consecutive blocks on the next level.
An unmarked block \(B_u\), on the other hand, is a leaf that only store a pointer towards the pair of consecutive (marked) blocks \(B_i\cdot B_{i+1}\) that contain the leftmost occurrence of \(B_u\) and its offset in the two blocks.
See \cref{fig:example_block_tree} for an example.

\begin{figure}
  \centering
  \begin{tikzpicture}[
  grow=down,  %
  sibling distance=5em, %
  level distance=4em, %
  edge from parent/.style={draw, -latex},  %
  every node/.style={font=\footnotesize\ttfamily, draw, rectangle}, %
  level 1/.style={sibling distance=12em}, %
  level 2/.style={sibling distance=4em}, %
  level 3/.style={sibling distance=2em}  %
]

\node{abrainadrain\vphantom{jd}}
  child {node {abrain\vphantom{jd}}
    child {node {ab\vphantom{jd}}
      child {node {a\vphantom{jd}}}
      child {node {b\vphantom{jd}}}
    }
    child {node (t1) {ra\vphantom{jd}}
      child {node {r\vphantom{jd}}}
      child {node {a\vphantom{jd}}}
    }
    child {node (t2) {in\vphantom{jd}}
      child {node {i\vphantom{jd}}}
      child {node {n\vphantom{jd}}}
    }
  }
  child {node {adrain\vphantom{jd}}
    child {node {ad\vphantom{jd}}
      child {node {a\vphantom{jd}}}
      child {node {d\vphantom{jd}}}
    }
    child {node (s1) {ra\vphantom{jd}} }
    child {node (s2) {in\vphantom{jd}} }
  };

  \path (s1.south) edge[-latex,draw, bend left,dashed,my-grey] (t1.south);
  \path (s2.south) edge[-latex,draw, bend left,dashed,my-grey] (t2.south);

\end{tikzpicture}
  \caption{Example of a block tree for \(T=\texttt{abrainadrain}\) with \(s=2\) and \(\tau=3\). On the third level, only two children are created, as the block size is not dividable by three. We included this case for the sake of a small example. The pointer to marked blocks are given as dashed gray arrows.}
  \label{fig:example_block_tree}
\end{figure}
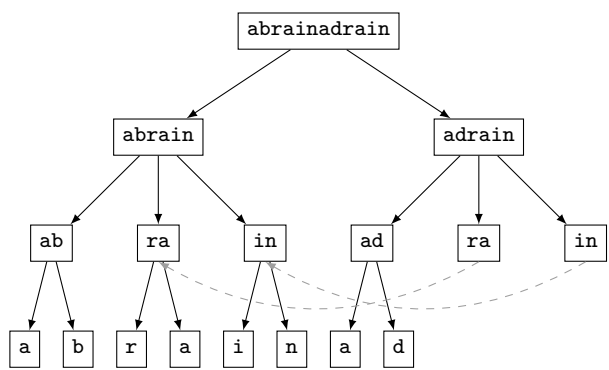

Interestingly, the number of blocks per level is bounded, which will later help us with analyzing the running time of our algorithms.
\begin{lemma}{\cite{BelazzouguiCGGK2021BlockTrees}}
  \label{lem:blocks_per_level}
  Any level of a block tree (except the first) contains at most \(3z\tau\) blocks.
\end{lemma}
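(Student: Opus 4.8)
The plan is to bound the number of blocks on any level by relating marked blocks to the Lempel–Ziv factorization. The key intuition is that a block (or pair of consecutive blocks) only gets marked when it contains the \emph{leftmost} occurrence of some substring; such leftmost occurrences must straddle or align with LZ factor boundaries, and there are only $z$ such boundaries. So I would count marked blocks by charging each one to a nearby LZ factor boundary, and then observe that unmarked blocks are children of marked blocks, giving the extra $\tau$ factor.

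Let me think about this more carefully.

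**The plan is to** fix an arbitrary level of the block tree (other than the root level) and bound the number of blocks it contains by relating the marking rule to the Lempel–Ziv factorization of $T$. The central observation is that a block only survives to the next level if it is \emph{marked}, and a block is marked precisely because it participates in the \emph{leftmost} occurrence of a pair of consecutive blocks $B_i\cdot B_{i+1}$ on its level. I would argue that such a leftmost occurrence must contain the left end of an LZ factor: if the entire pair $B_i\cdot B_{i+1}$ lay strictly inside a single LZ factor (i.e.\ contained no factor boundary), then by the definition of LZ77 the factor itself points to an earlier occurrence, so this same substring — and hence the pair — would have appeared earlier in $T$, contradicting that this is its leftmost occurrence.

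**First I would** make this charging argument precise. On a given level, let the blocks have length $\ell$. A marked pair $B_i\cdot B_{i+1}$ spans $2\ell$ consecutive text positions and must contain the starting position of at least one LZ factor. I would charge each marked pair to the leftmost factor boundary it contains. The crucial counting step is to bound how many distinct marked pairs can be charged to a single factor boundary: a factor boundary at position $p$ lies inside a pair only if that pair's left block starts within the window $[p-2\ell+1,\,p]$, and since block starts on a level are spaced $\ell$ apart, a single boundary can be contained in at most a constant number — roughly three — of the (overlapping) consecutive pairs. Summing over the $z$ factor boundaries gives $O(z)$ marked pairs, hence $O(z)$ marked blocks on the \emph{parent} level. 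Each marked block spawns exactly $\tau$ children on the level under consideration, so the number of blocks on this level is at most $\tau$ times the number of marked blocks one level up, yielding the bound $3z\tau$ after I pin down the constants.

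**The hard part will be** nailing down the constant $3$ rather than a generic $O(1)$: I must count exactly how many overlapping consecutive-block pairs can contain a fixed factor boundary, being careful about whether a pair is marked because of its left or right block and about pairs at the very start of the level where fewer pairs overlap. I would also need to handle the technicality that the marking acts on \emph{pairs} while the lemma counts \emph{blocks}, and the edge effects from the partial block at the end of a level (as seen in the figure, where the block size need not divide $\tau$). I expect the straddling/charging argument itself to be clean once stated carefully; the bookkeeping that converts "at most $3z$ marked pairs" into "at most $3z\tau$ blocks per level" is where I would be most careful to avoid off-by-constant errors.
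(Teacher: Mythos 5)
Your approach is correct and is essentially the original argument of Belazzougui et al.~\cite{BelazzouguiCGGK2021BlockTrees}; note that the paper at hand does not prove \cref{lem:blocks_per_level} at all (it is stated with a citation), so the comparison is with that original proof. Your key reduction---a pair of consecutive blocks that is a leftmost occurrence cannot lie inside a single LZ77 factor, because the factor's source would give an earlier occurrence of the same substring---is exactly the right idea, and charging marked material to factor start positions is the standard counting.

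One precision fix is needed, and it matters because the lemma is precisely about the constant~\(3\). A factor start \(p\) lies in exactly \emph{one} block \(B_m\) of the level, so the only consecutive pairs whose span contains \(p\) are \((B_{m-1},B_m)\) and \((B_m,B_{m+1})\): two pairs, not ``roughly three'' as your window count suggests. These two pairs involve only the three distinct blocks \(B_{m-1}\), \(B_m\), \(B_{m+1}\), so each factor start can account for at most three \emph{marked blocks}; summing over the \(z\) factor starts gives at most \(3z\) marked blocks per level, and since only marked blocks have (exactly \(\tau\)) children, every level but the first has at most \(3z\tau\) blocks. The bookkeeping must be done per block, exploiting that the two pairs share their middle block: if you instead count pairs per boundary and then convert pairs to blocks by doubling before multiplying by \(\tau\), you overshoot to \(6z\tau\) and do not obtain the stated bound.
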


We do not have to continue the marking of blocks until block have size one.
Instead, we stop as soon as explicitly storing the subtrees requires less space than storing the pointers and offsets for unmarked blocks \(B\), i.e., when \(|B|\in\Theta(\log_\sigma n)\).
Overall, the block tree requires \(O(s+z\tau\log_\tau\frac{n\log\sigma}{s\log n})\) words of space.
When choosing \(s=z\), the block tree requires \(O(z\tau\log\frac{n\log\sigma}{z\log n})\) words of space.
Note that there are different space-time trade-offs depending on the choice of \(s\) and \(\tau\)~\cite{BelazzouguiCGGK2021BlockTrees}.
When choosing \(s=\log_\sigma n\), we also get \(h=\log_\tau\frac{n\log\sigma}{s\log n}\).

\subsection{Sequential Block Tree Construction.}
\label{sec:sequential_block_tree_construction}
We first describe the original block tree construction algorithm~\cite{BelazzouguiCGGK2021BlockTrees}.
Note that there also exists a highly engineered block tree construction algorithm based on basic toolbox data structures for compression~\cite{KopplKM2023LPFBlockTrees}.
However, the algorithmic idea is the same and on a high level, it simulates the procedure described below.

The block tree is constructed in two phases followed by an optional pruning phase.
In the first phase, we mark all leftmost occurrences of consecutive blocks \(B_i\cdot B_{i+1}\).
Then, in the second phase, we compute the pointers (and offsets) of the unmarked blocks to their leftmost occurrence on the level.
During the pruning phase, we might reduce the number of nodes in the tree.
This phase does not improve the asymptotic size of the block tree.

\begin{description}
\item[First Phase:]
In the first phase, we use a hash table to identify the leftmost occurrence of each pair of consecutive blocks, i.e., when a pair is already contained in the hash table, it cannot be the leftmost occurrence.
Since we are not only interested in exact matches, we have to also scan the text to identify positions where such a pair may occur between blocks.
To efficiently compare blocks, we use Karp-Rabin fingerprints~\cite{KarpR1987Fingerprints}.
While we still have to verify that two matching fingerprints refer to the same substing, we only have to verify each substring once.
This results in a running time linear in the size of the level.
\item[Second Phase:]
In the second phase, we again use hash tables and Karp-Rabin fingerprints to identify blocks.
This time, we are interested in the unmarked blocks.
For those blocks, we now have to compute the pointer (and offset) to the leftmost occurrence, which will be contained in a pair of consecutive marked blocks.
To this end, we store the fingerprints of all unmarked blocks in the hash table.
Then, we scan the current level to find the leftmost occurrences of these blocks using their fingerprints.
We can do so in time linear in the size of the level, as we also have to consider each block only once.
\item[Optional Pruning:]
Finally, there is an optional pruning step, which can further reduces the size of the block tree in practice.
Unfortunately, there is no asymptotic improvement achievable in this step.
Remember that in the first phase, we mark all leftmost occurrences of consecutive pairs of blocks.
This is necessary to ensure that the leftmost occurrence of any unmarked block does exist on the same level (partially) in one (or two) marked blocks.
However, not all marked blocks are the target of a pointer.
During the pruning step, these untargeted but marked blocks are removed.
Note that we do not consider this step in our parallel algorithm described below, as it does not improve the asymptotic space bounds.
However, it is part of our implementation, cf.~\cref{sec:implementation_details}.
\end{description}

\subsubsection{Running Time}
The running time depends on the sizes of the level of the block tree.
Let \(b_k\) be the block size on the \(k\)-th level.
Furthermore, let \(c_k\) be the number of blocks on the \(k\)-th level.
This gives us \(b_k\cdot c_k\) characters to process on level \(k\).
According to \cref{lem:blocks_per_level}, there are at most \(3z\tau\) blocks per level.

Thus, levels starting after level  \(\ell\geq 1+\log_\tau\frac{3z}{s}\), we can bound the total number of character to process (on all levels from \(\ell\) to the last level \(h\)) by \(\sum_{m=\ell}^h b_m\cdot c_m\leq \sum_{m=\ell}^h\frac{n}{\tau^{m-\ell}}=n\sum_{m=\ell}^h\frac{1}{\tau^{m-\ell}}\leq \frac{n}{1-\tau^{=\ell}}=O(n)\).
All previous levels, contain at most \(O(n)\) characters (each).
Since we can process each character in constant expected time with high probability, this gives us the following running time.

\begin{lemma}[\cite{BelazzouguiCGGK2021BlockTrees}]
  The block tree of a text \(T\in[1,\sigma]^n\) can be constructed in \(O(n(1+\log_\tau\frac{z}{s}))\) expected time with high probability and \(O(s+z\tau\log_\tau\frac{n\log\sigma}{s\log n})\) working space.
\end{lemma}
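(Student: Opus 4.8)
The plan is to multiply the number of characters processed over all levels by the \(O(1)\) expected cost of handling one character, and to account the working space as the final tree size plus the hash tables of a single level. For the character count I follow the decomposition sketched above, splitting the levels at the threshold \(\ell = 1 + \log_\tau\frac{3z}{s}\). A level \(k < \ell\) has not yet saturated its block count, and since the blocks on any level are pairwise-disjoint substrings of \(T\) such a level carries at most \(n\) characters; there are \(O(1 + \log_\tau\frac{z}{s})\) of them, contributing \(O(n(1 + \log_\tau\frac{z}{s}))\) characters together. For a level \(k \geq \ell\), \cref{lem:blocks_per_level} caps the block count at \(3z\tau\) while the block size keeps shrinking by a factor \(\tau\) per level, so \(b_k c_k \leq n/\tau^{k-\ell}\) and the geometric series \(\sum_{k \geq \ell} b_k c_k\) sums to \(O(n)\), as already computed. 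Adding the two parts yields \(O(n(1+\log_\tau\frac{z}{s}))\) characters overall.

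Next I bound the work per character. Both phases of every level are a single left-to-right scan: at each candidate position we obtain the Karp-Rabin fingerprint of the relevant block in \(O(1)\) amortized time via its rolling property and perform a constant number of hash-table operations, each costing \(O(1)\) in expectation. When two fingerprints coincide we confirm the match by a direct comparison, but charge that verification to the block's first occurrence only, so the verification work on a level stays linear in its character count. Multiplying by the character count from the first step gives expected running time \(O(n(1+\log_\tau\frac{z}{s}))\).

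The high-probability qualifier, and the step I expect to be the main obstacle, is the fingerprint analysis. By the collision bound, two unequal substrings of length \(m\) share a fingerprint with probability \(O(\frac{m\log\sigma}{n^c})\); a union bound over the fingerprint comparisons made during construction shows that, for a suitable constant \(c>1\), with probability \(1-n^{-\Omega(1)}\) no two unequal blocks collide. On this event every fingerprint equality is a genuine recurrence, so the matching is correct and the one-time verification of each block keeps the per-level work linear. Making this airtight requires checking (i) that the number of comparisons is polynomially bounded, so the union bound is not vacuous, and (ii) that the verification is indeed charged once per distinct block, so that no level exceeds its \(O(b_k c_k)\) budget.

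Finally, for the working space I note that the algorithm keeps only the partially built tree together with the hash tables of the current level. The tree occupies \(O(s + z\tau\cdot h)\) words, where \(h = \log_\tau\frac{n\log\sigma}{s\log n}\) is forced by starting at block size \(n/s\) and stopping once \(|B|\in\Theta(\log_\sigma n)\). The hash tables of one level add only \(O(s + z\tau)\) words and can be freed before descending, so the peak working space is \(O(s + z\tau\log_\tau\frac{n\log\sigma}{s\log n})\), matching the claim.
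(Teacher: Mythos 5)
Your proof takes essentially the same route as the paper's: the same decomposition of levels at the threshold \(\ell = 1+\log_\tau\frac{3z}{s}\) where the block count saturates at \(3z\tau\) (\cref{lem:blocks_per_level}), the same geometric-series bound giving \(O(n)\) total characters on the deeper levels and \(O(n)\) per shallower level, combined with constant expected work per character. The extra detail you provide---the rolling-fingerprint and hash-table cost accounting, the union bound for the high-probability claim, and the per-level space accounting---correctly fills in precisely what the paper's sketch delegates to the cited reference, so your argument is a sound and more complete rendering of the same proof.
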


\section{Parallel Block Tree Construction}
\label{sec:parallel_block_tree_construction}
Now, we present the main result of this paper: The first dedicated parallel block tree construction algorithms.
In \cref{sec:parallel_sort}, we discuss a simple parallelization based on sorting that requires \(O(n)\) words of working memory.
Then, in \cref{sec:memory_efficient_bt}, we give a more space-efficient construction algorithm that introduces a trade-off between scalability and working memory requirements.

\subsection{Parallel Sorting using \(O(n)\) Words of Memory.}
\label{sec:parallel_sort}
Our first parallel block tree construction algorithm uses sorting instead of hash tables.
We simply compute all fingerprints whenever necessary in parallel.

\begin{lemma}[\cite{Ellert0S2020LCPAwareStringSorting}]
  \label{lem:par_fingerprints}
 For any length-\(\ell\) substring \(T[i..i+\ell-1]\), the Karp-Rabin fingerprint \(\phi(i,i+\ell-1)\) can be computed in \(O(\log\ell)\) depth, \(O(\ell)\) work, and \(O(\ell)\) words of space in the EREW model.
\end{lemma}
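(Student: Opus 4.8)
The plan is to compute the Karp-Rabin fingerprint $\phi(i,i+\ell-1) = \left(\sum_{k=i}^{i+\ell-1} T[k]\cdot r^{(i+\ell-1)-k}\right) \bmod q$ as a parallel reduction over the $\ell$ characters of the substring. The key observation is that each summand $T[k]\cdot r^{(i+\ell-1)-k} \bmod q$ can be computed independently, and the fingerprint is simply their sum modulo $q$. The main structural idea is to express this computation as an instance of the All-Prefix operation from \cref{lem:all_prefix}, or more precisely its simpler cousin, a parallel reduction under an associative operator.

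First I would define an associative operator that combines partial fingerprints. For a contiguous chunk of the substring, I would track a pair $(\psi, p)$, where $\psi$ is the fingerprint of that chunk (treating the chunk as a standalone string) and $p = r^{|\text{chunk}|} \bmod q$ is the accumulated power of $r$ over the chunk's length. Two adjacent pairs $(\psi_L, p_L)$ and $(\psi_R, p_R)$ then combine as $(\psi_L \cdot p_R + \psi_R \bmod q,\; p_L \cdot p_R \bmod q)$. This operator is associative, and since $q \in \Theta(n^c)$ fits in $O(1)$ machine words, each combine step runs in $O(1)$ time. Initializing the $\ell$ leaves to $(T[i+k-1] \bmod q,\; r \bmod q)$ and folding them with a balanced binary reduction tree yields the fingerprint at the root. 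By \cref{lem:all_prefix}, this reduction runs in $O(\log\ell)$ depth and $O(\ell)$ work in the EREW model, since each PE reads distinct memory cells at each level of the reduction tree, and the total number of operations across all $O(\log\ell)$ levels is $O(\ell)$. The $O(\ell)$ space bound follows because we store one $(\psi,p)$ pair per leaf and reuse space as we ascend the tree.

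\textbf{The main obstacle} I expect is establishing the $O(\ell)$ work bound together with the EREW requirement of exclusive memory access. A naive scheme that precomputes all powers $r^0, r^1, \dots, r^{\ell-1} \bmod q$ separately and then multiplies each $T[k]$ by its power would need those powers available simultaneously; computing them is itself a prefix-product (\cref{lem:all_prefix} with $\otimes = $ multiplication mod $q$), which adds work but stays within budget. The subtlety is ensuring no two PEs read the same cell in the same time step — the balanced-reduction formulation avoids this cleanly because at each level the read pattern partitions the active cells, so I would favor the associative-pair approach over an explicit powers-then-dot-product approach precisely to keep the exclusive-read discipline transparent. I would verify that the combine operator correctly accounts for the relative positions (the left chunk's fingerprint must be shifted by $p_R$, the power corresponding to the right chunk's length), which is where the bookkeeping could easily go wrong by an off-by-one in the exponent.

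Finally I would remark that the correctness of the operator reduces to the Horner-style identity underlying Karp-Rabin fingerprints: concatenating strings $U$ and $V$ gives $\phi(UV) = \phi(U)\cdot r^{|V|} + \phi(V) \bmod q$, which is exactly the first component of the combine rule, while the second component tracks $r^{|U|+|V|}$ so that the rule remains self-consistent under further concatenation. Since the reduction tree computes a fold of this associative operator over the $\ell$ single-character base cases, induction on the tree height gives that the root holds $\phi(i, i+\ell-1)$, completing the argument.
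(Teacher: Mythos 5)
Your proposal is correct: the pair operator $(\psi_L,p_L)\otimes(\psi_R,p_R)=(\psi_L\cdot p_R+\psi_R \bmod q,\;p_L\cdot p_R \bmod q)$ is indeed associative, the single-character leaves are initialized correctly, and a balanced binary reduction gives the claimed $O(\log\ell)$ depth, $O(\ell)$ work, and $O(\ell)$ space with exclusive reads. The paper itself offers no proof of this lemma---it is imported by citation from Ellert et al.---and your argument via the concatenation identity $\phi(UV)=\phi(U)\cdot r^{|V|}+\phi(V) \bmod q$ folded over a reduction tree is precisely the standard derivation underlying that cited result, so there is nothing further to reconcile.
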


Since we do need the fingerprints of all pairs of blocks (in the first phase) and all substrings of length \(b_\ell\) (in the second phase on each level \(\ell\)), we adopt \cref{lem:par_fingerprints}.

\begin{lemma}
  Computing the Karp-Rabin fingerprint of every length-\(\ell\) substring requires \(O(\log n)\) depth, \(O(n)\) work, and \(O(n)\) words of space in the EREW model.
\end{lemma}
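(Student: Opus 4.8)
The plan is to reduce the computation of all length-$\ell$ fingerprints to a single all-prefix (prefix-sum) operation over a carefully chosen associative operator, thereby applying \cref{lem:all_prefix} directly rather than invoking \cref{lem:par_fingerprints} once per substring. Naively computing each of the $\Theta(n)$ fingerprints independently via \cref{lem:par_fingerprints} would cost $O(\ell)$ work apiece, giving $O(n\ell)$ total work, which is too much. The key observation is that the Karp-Rabin fingerprint is defined by a running recurrence, so the fingerprints of all prefixes can be obtained \emph{together}, and any fixed-length substring fingerprint is then recoverable in $O(1)$ time from two prefix fingerprints.

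First I would compute the prefix fingerprints $P_i = \phi(1,i) = \left(\sum_{k=1}^{i} T[k]\cdot r^{i-k}\right)\bmod q$ for all $i\in[1,n]$. Writing this as $P_i = (P_{i-1}\cdot r + T[i])\bmod q$ exposes it as a \emph{linear recurrence}, and the standard trick is to encode each step as an affine map $x \mapsto (r\cdot x + T[i])\bmod q$ and compose these maps by prefix scan. Concretely, associate to position $i$ the pair $a_i = (r, T[i])$ representing the map, and define the composition operator $(\alpha_1,\beta_1)\otimes(\alpha_2,\beta_2) = (\alpha_1\alpha_2,\ \alpha_2\beta_1 + \beta_2)\bmod q$, which is associative and computable in $O(1)$ time. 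Applying \cref{lem:all_prefix} yields the composed maps, and evaluating each at the initial value $0$ gives all $P_i$ in $O(\log n)$ depth, $O(n)$ work, and $O(n)$ space in the EREW model.

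Next I would precompute the powers $r^0,r^1,\dots,r^{\ell}\bmod q$ (again a prefix scan under multiplication, well within the stated bounds) so that each length-$\ell$ fingerprint can be extracted by the telescoping identity
\[
\phi(i,i+\ell-1) = \bigl(P_{i+\ell-1} - P_{i-1}\cdot r^{\ell}\bigr)\bmod q,
\]
with the convention $P_0 = 0$. Each of the $n-\ell+1$ substrings is handled by an independent processor reading two prefix values and one precomputed power, performing $O(1)$ arithmetic; these reads are to distinct output cells and the shared power $r^\ell$ may be replicated to avoid read conflicts, keeping the step within EREW at $O(1)$ depth and $O(n)$ work.

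\emph{The main obstacle} is the EREW requirement: \cref{lem:all_prefix} is stated for EREW, but the extraction step as written has many processors reading the same cell $r^\ell$ (and potentially overlapping prefix cells), which CREW would permit freely. To stay in EREW I would broadcast $r^\ell$ into an $n$-element array by an $O(\log n)$-depth, $O(n)$-work replication scan before the extraction, and verify that the two prefix reads per processor touch distinct cells. A secondary point to check is correctness of the telescoping identity at the boundaries and under modular arithmetic (the subtraction must be taken modulo $q$ with a conditional add of $q$), but this is routine. Summing the three phases—prefix map scan, power scan, and extraction—keeps the total at $O(\log n)$ depth, $O(n)$ work, and $O(n)$ words of space, as claimed.
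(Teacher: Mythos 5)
Your proposal is correct, and it supplies what the paper actually leaves implicit: the lemma is stated without a written proof, the text merely remarking that it ``adopts'' \cref{lem:par_fingerprints}, the cited single-fingerprint result. Your route---encoding the Horner-style recurrence \(P_i=(P_{i-1}\cdot r+T[i])\bmod q\) as composition of affine maps, running the all-prefix operation of \cref{lem:all_prefix}, and recovering each window by the telescoping identity \(\phi(i,i+\ell-1)=\bigl(P_{i+\ell-1}-P_{i-1}\cdot r^{\ell}\bigr)\bmod q\)---is the standard (and essentially the only) way to reach \(O(n)\) total work, since invoking \cref{lem:par_fingerprints} once per substring would cost \(O(n\ell)\) work, exactly as you observe. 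Two small points deserve tightening. First, your claim that ``the two prefix reads per processor touch distinct cells'' is not by itself the right condition for EREW: the cell \(P_m\) is wanted both by processor \(m-\ell+1\) (as its right endpoint) and by processor \(m+1\) (as its left endpoint), so concurrent reads could occur \emph{across} processors. The conflict disappears because in the synchronous PRAM all processors issue their right-endpoint reads in one time step and their left-endpoint reads in the next, and within each step the addressed cells are pairwise distinct; alternatively, keep two copies of the prefix array. Second, your \(O(\log n)\)-depth replication of \(r^{\ell}\) into an array is the correct EREW broadcast and stays within the stated bounds. With these clarifications the argument is complete and yields the claimed \(O(\log n)\) depth, \(O(n)\) work, and \(O(n)\) words of space.
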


The space necessary to store the fingerprints is also the space bottleneck of the construction algorithm, as we need it on all levels.

We now can efficiently compute the fingerprints, but we still have to validate all comparisons.
This can be done in \(O(\log n)\) time when reading memory cells is exclusive (EREW and CREW) and in \(O(1) \) time in the CRCW models.
In both cases, we require \(O(n)\) work and no additional space.

Then, we sort tuples consisting of the fingerprint and the corresponding block id.
In the result, matching fingerprints are consecutive.
Using \cref{lem:sorting_ew,lem:sorting_cw}, we can sort in \(O(\log n)\) time and \(O(\frac{\log n}{\log\log\log n})\) time in EREW and CRCW respectively.

Using the sorted fingerprints, we have to identify the leftmost occurrences.
Since all fingerprints are sorted (first by fingerprint and the by position), this can be done in \(O(1)\) time.

The second phase is a little bit more sophisticated.
Here, we first again use sorting to identify the block and position where we have to point to for the unmarked blocks.
Now, we combine the fingerprints obtained during the scan and the fingerprints of the unmarked blocks.
We sort these fingerprints, again as tuple of fingerprint and text position.
For our unmarked blocks, we use the text position to also indicate that they are an unmarked block.
After sorting, the fingerprints are grouped and sorted by text position with the unmarked blocks at the end.

We then build the array \(\mathit{occ}\) that will help us to compute the pointers and offsets.
For each fingerprint, we write a zero in \(\mathit{occ}\), if it is the same as the fingerprint to its left and its text position otherwise.
For the rightmost occurrence of a fingerprint, we write the negated text position of the first occurrence in \(\mathit{occ}\).
Then, after computing the prefix sum over \(\mathit{occ}\), it contains the text positions for all but the rightmost fingerprint (for which we find the correct position to its left).
The prefix sum can be computed in \(O(\log n)\) time and \(O(\frac{\log n}{\log\log n})\) time (Common-CRCW), respectively.

This allows us to compute every level of the block tree using only sorting and prefix sums.
In the description above, we always assume that we have to operate on \(O(n)\) objects all the time.
However, on each level of the block tree, we might replace some blocks with references.

\begin{theorem}
  The block tree of a text \(T\in[1,\sigma]^n\) can be computed in \(O(\log n\log_\tau\frac{n\log\sigma}{s\log n})\) time and \(O(n\log n(1+\log_\tau\frac{n\log\sigma}{s\log n}))\) work in \(O(n)\) words of space in EREW PRAM.
\end{theorem}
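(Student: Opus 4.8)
The plan is to build the block tree level by level, applying the per-level primitives already assembled in the preceding discussion and then summing the cost over all levels. First I would fix the number of levels: since we take \(s = \log_\sigma n\), the height is \(h = \log_\tau\frac{n\log\sigma}{s\log n}\), so there are \(h+1 = 1 + \log_\tau\frac{n\log\sigma}{s\log n}\) levels to process (the root level plus the \(h\) lower levels). The key structural fact is that the two construction phases on a single level reduce entirely to fingerprint computation, comparison validation, integer sorting, and prefix sums---each of which has a known EREW cost. Moreover, because a block is expanded on the next level only if it was marked, the levels must be built top-down, so their costs accumulate sequentially.

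Next I would pin down the per-level cost. On each level we (i) compute the required length-\(\ell\) fingerprints, which by the adapted version of \cref{lem:par_fingerprints} costs \(O(\log n)\) time, \(O(n)\) work, and \(O(n)\) space; (ii) validate the candidate matches, which in the exclusive-read EREW model takes \(O(\log n)\) time and \(O(n)\) work with no extra space; (iii) sort the resulting (fingerprint, block-id) tuples, which by \cref{lem:sorting_ew} costs \(O(\log n)\) time, \(O(n\log n)\) work, and \(O(n)\) space; and (iv) run the prefix sum over the \(\mathit{occ}\) array to recover pointers and offsets, which by \cref{lem:all_prefix} costs \(O(\log n)\) time and \(O(n)\) work. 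Since these are a constant number of sequential phases, each of depth \(O(\log n)\), a single level takes \(O(\log n)\) time and \(O(n\log n)\) work---dominated by the sort---in \(O(n)\) space.

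Finally I would aggregate over the levels. Because the levels are processed one after another, the time bound is the per-level time multiplied by the number of levels, giving \(O(\log n \cdot \log_\tau\frac{n\log\sigma}{s\log n})\), while the work multiplies to \(O(n\log n(1 + \log_\tau\frac{n\log\sigma}{s\log n}))\), where the additive \(1\) absorbs the root level. The space bound is the crucial subtlety: although fingerprints are needed on every level, they are recomputed afresh per level and the auxiliary sorting and prefix-sum buffers are reused, so the peak space never exceeds \(O(n)\) words.

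I expect the main obstacle to be justifying the per-level bounds under the strict EREW constraints rather than the summation itself, which is routine. In particular, the validation step and the pointer-recovery step via the \(\mathit{occ}\) array must be realized without any concurrent reads, forcing the \(O(\log n)\) overhead there. One must also argue that the simplifying assumption of operating on \(O(n)\) objects on every level is safe: replacing blocks by references on deeper levels only decreases the workload, so the naive per-level bound of \(O(n\log n)\) work remains a valid upper bound. A sharper count via \cref{lem:blocks_per_level} would replace the height \(h\) by the smaller \(\log_\tau\frac{z}{s}\), but this refinement is not required for the stated bound.
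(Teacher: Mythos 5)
Your proposal is correct and takes essentially the same route as the paper: build the levels one after another, charge each level \(O(\log n)\) time and \(O(n\log n)\) work (dominated by the sort of \cref{lem:sorting_ew}) in \(O(n)\) reusable space, and multiply by the height \(h=\log_\tau\frac{n\log\sigma}{s\log n}\). The only difference is cosmetic: the paper additionally invokes \cref{lem:blocks_per_level} to split the levels at the point where total level size starts decaying geometrically, but then upper-bounds each level's time by \(\log n\) anyway and recombines \(\log_\tau\frac{z}{s}+\log_\tau\frac{n\log\sigma}{z\log n}\) into the stated bound, so your direct per-level-times-height argument (with the observation that pruned blocks only decrease the workload) matches its final computation.
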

\begin{proof}
  We know that there are at most \(3z\tau\) blocks on all levels of the block tree but the first (\cref{lem:blocks_per_level}).
  Therefore, there is a level \(\ell>1+\log_\tau\frac{3\tau}{z}\), such that the total block length at all following level is linear in the text length.
  For all previous level, i.e., level \(<\ell\) we have to assume that each level has size \(O(n)\).

  We have to process each level individually, as we require information about marked and unmarked block on preceding levels.
  The total length of the remaining levels is \(\sum_{m=\ell}^hk_m\cdot c_m\leq \sum_{m=\ell}^h\frac{n}{\tau^{m-\ell}}\).

  On each level, we can do all operations in logarithmic time (and linear work) on the size of that level.
  This leaves us with total time \(\sum_{m=\ell}^h\log n-\log\tau^{m-\ell}\leq\sum_{m=\ell}^h\log n\).

  We know that the height of a block tree is \(h=\log_\tau\frac{n\log\sigma}{s\log n}\).
  The total time necessary for the last level is therefore \(O(\log_\tau\frac{n\log\sigma}{z\log n}\log n)\).
  If we include the first level, too, we get a total time of \(O(\log n(\log_\tau\frac{z}{s}+\log_\tau\frac{n\log\sigma}{z\log n}))=O(\log n\log_\tau\frac{n\log\sigma}{s\log n})\).
\end{proof}

When considering Common-CRCW, we can improve the result using better sorting and prefix sum algorithms.
\begin{lemma}
  In the Common-CRCW, the block tree of a text \(T\in[1,\sigma]^n\) can be computed in \(O(\frac{\log n}{\log\log n}\log_\tau\frac{n\log\sigma}{s\log n})\) time and \(O(n\log n(1+\log_\tau\frac{n\log\sigma}{s\log n}))\) work in \(O(n)\) words of space.
\end{lemma}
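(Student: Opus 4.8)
The plan is to reuse the proof of the EREW result essentially verbatim, but to replace each per-level primitive by its faster Common-CRCW counterpart before re-summing over the \(h=\log_\tau\frac{n\log\sigma}{s\log n}\) levels. The shape of the algorithm is unchanged: on each level we compute the relevant fingerprints, validate fingerprint collisions, sort fingerprint/position tuples, and run a prefix sum over the array \(\mathit{occ}\) to recover the pointers and offsets. Only the costs of these steps change with the model.

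First I would upgrade the three model-sensitive steps. Collision validation, which costs \(O(\log n)\) in EREW because of exclusive reads, drops to \(O(1)\) in any CRCW model, exactly as the algorithm description already observes. Sorting the fingerprint tuples improves to \(O(\frac{\log n}{\log\log\log n})\) by \cref{lem:sorting_cw}, and the prefix sum over \(\mathit{occ}\), being an addition prefix sum, improves to \(O(\frac{\log n}{\log\log n})\) by \cref{lem:prefix_sum}. To keep fingerprint computation off the critical path I would compute the prefix fingerprints of \(T\) and the powers \(r^k \bmod q\) once at the outset; thereafter any length-\(b_m\) fingerprint on any level is recovered in \(O(1)\) time from these tables, so the per-level fingerprint cost is only \(O(1)\) depth and linear work. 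Hence the per-level time is governed by the sort and the prefix sum and is \(O(\frac{\log n}{\log\log n})\).

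With this per-level bound, the summation is identical to the EREW argument: by \cref{lem:blocks_per_level} every level past the first holds at most \(3z\tau\) blocks, so from some level \(\ell>1+\log_\tau\frac{3z}{s}\) onward the total block length decreases geometrically and is \(O(n)\), while each earlier level has length \(O(n)\). Summing \(O(\frac{\log n}{\log\log n})\) over all levels telescopes as before to the claimed time \(O(\frac{\log n}{\log\log n}\log_\tau\frac{n\log\sigma}{s\log n})\); and since every primitive still performs \(O(n\log n)\) work (sorting) or \(O(n)\) work (the rest) on a level of length \(O(n)\), and the \(O(n)\)-word fingerprint tables are reused across levels, the work and space bounds carry over unchanged from the EREW theorem.

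I expect the main obstacle to be reconciling the sorting bound with the target. Since \(\log\log\log n\le\log\log n\), \cref{lem:sorting_cw} gives sorting at \(O(\frac{\log n}{\log\log\log n})\), which is strictly larger than the prefix-sum bound \(O(\frac{\log n}{\log\log n})\); so to land the stated \(O(\frac{\log n}{\log\log n})\) per level one must either sharpen the sorting step for the (polynomially bounded integer) keys at hand, or accept the mildly larger factor and restate the bound. A secondary point to verify carefully is that the one-time \(O(\log n)\) fingerprint precomputation is genuinely absorbed into the final bound, which holds whenever \(h=\Omega(\log\log n)\).
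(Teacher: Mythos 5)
Your route is exactly the one the paper intends: the paper offers no explicit proof of this lemma at all, only the one-sentence remark that the EREW result improves ``using better sorting and prefix sum algorithms,'' which is precisely the primitive-substitution you carry out over the same level-by-level summation. Beyond that, two of your observations are correct and go further than the paper does. First, the mismatch you flag is a genuine inconsistency in the paper, not a defect of your argument: \cref{lem:sorting_cw} only yields sorting in \(O(\frac{\log n}{\log\log\log n})\) time, which is asymptotically \emph{larger} than the prefix-sum bound \(O(\frac{\log n}{\log\log n})\) of \cref{lem:prefix_sum}, so the per-level cost derivable from the paper's own toolbox is \(O(\frac{\log n}{\log\log\log n})\) and the stated time bound does not follow without a sharper CRCW sorting routine; since the keys are Karp-Rabin fingerprints, i.e., integers bounded by \(q\in\Theta(n^c)\), known parallel integer-sorting results can close this gap, but the paper neither cites one nor acknowledges the issue. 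Second, your one-time table of prefix fingerprints and powers of \(r\) is a necessary fix rather than an optional convenience: recomputing fingerprints per level via \cref{lem:par_fingerprints} costs \(O(\log n)\) depth on every level (it is a prefix computation with a non-additive operator, so \cref{lem:prefix_sum} does not apply to it), which would by itself exceed the claimed per-level bound; and your caveat that the one-time \(O(\log n)\) precomputation is only absorbed when \(h=\Omega(\log\log n)\) is likewise accurate and left unaddressed by the paper.
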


\subsection{Domain Decomposition using \(O(s+K(z\tau))\) Words of Memory.}
\label{sec:memory_efficient_bt}
The big disadvantage of the algorithm described above is that it requires a lot of working memory.
We tackle this problem, by utilizing \emph{domain decomposition} to introduce a trade-off between scalability and required working memory.
The idea of domain decomposition is to partition the input and work on each partition as independent as feasible.

For block trees, each level heavily depends on the previous ones.
Therefore, we have to merge and repetition the input on each level.
The algorithm presented in \cref{lem:par_fingerprints} remains nearly the same.
The main difference being that we now pre-filter locally.

Now, our algorithm uses \(K\) partitions.
For each partition, we have a hash table that serves as local filter.
Now, only the locally leftmost occurrences of block pairs (and in the second phase substrings obtained during the scan) will be sorted.
This significantly reduces the memory overhead, as we now only consider a \(K\)-approximation of the blocks are necessary for the construction.

\begin{theorem}
  For any integer \(K>0\), the block tree of a text \(T\in[1,\sigma]^n\) can be computed in \(O(\frac{n}{K}\log n\log_\tau\frac{n\log\sigma}{s\log n})\) time and \(O(n\log n(1+\log_\tau\frac{n\log\sigma}{s\log n}))\) work in \(O(s+K(z\tau))\) words of space in EREW PRAM.
\end{theorem}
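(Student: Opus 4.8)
The plan is to run the sorting-based pipeline of the preceding theorem essentially unchanged, but to interpose a \emph{local} hash-table filter between the fingerprint computation and the global sort on every level. First I would split the current level into \(K\) contiguous partitions and assign one PE to each. Processing its partition top-down, a PE computes the relevant fingerprints---the pair fingerprints in the first phase and the length-\(b_k\) substring fingerprints in the second---using \cref{lem:par_fingerprints}, and inserts them into a private hash table that keeps only the locally leftmost occurrence of each distinct block. Only these surviving candidates are handed to the global stage, which is exactly the sort-and-prefix-sum machinery already analyzed: \cref{lem:sorting_ew} groups equal fingerprints, and the \(\mathit{occ}\)/prefix-sum construction of \cref{lem:all_prefix} turns the grouped, position-ordered candidates into marks, pointers, and offsets.

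The heart of the argument is the space bound. By \cref{lem:blocks_per_level} every level past the first contains at most \(3z\tau\) \emph{distinct} blocks, so each PE's private filter retains at most \(3z\tau\) entries regardless of how many characters its partition spans. Summing over the \(K\) PEs bounds the global candidate set---and hence the input to every sort and prefix sum---by \(O(z\tau K)\), with the root level contributing its \(O(s)\) blocks separately. This is the promised \(K\)-approximation: we never materialize the \(\Theta(n)\) per-level objects of the \(O(n)\)-space algorithm, only \(K\) representatives per distinct block, which yields the \(O(s+K(z\tau))\) working space.

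Correctness survives the filtering because a globally leftmost occurrence is a fortiori leftmost inside its own partition, so no candidate that the global stage needs is ever discarded; the merge then selects the true global minimum among the \(K\) representatives. For the running time I would charge each PE \(O(n/K)\) for the rolling-fingerprint scan and local insertion on the top levels, observe that the geometric decay of level sizes guaranteed by \cref{lem:blocks_per_level} makes the remaining levels telescope, and add the \(O(\log n)\) depth of each global sort and prefix sum; over the \(h=\log_\tau\frac{n\log\sigma}{s\log n}\) levels this gives the stated \(O(\frac{n}{K}\log n\log_\tau\frac{n\log\sigma}{s\log n})\) time. The work is unchanged from the previous theorem because every character is still touched once per relevant level to build its fingerprint.

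The hard part will be the second phase across partition boundaries: the leftmost occurrence a PE must point an unmarked block to may sit in a \emph{different} partition and straddle two marked blocks, so the global merge has to reconstruct cross-partition pointers and their offsets correctly, and the marked/unmarked state inherited from the parent level must itself be threaded through the \(O(K(z\tau))\) budget without blowing it up. I would also need to check that the filter-then-sort hand-off introduces no concurrent reads, so that the whole construction stays within EREW.
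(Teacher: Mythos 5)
Your proposal is correct and follows essentially the same route as the paper: domain decomposition into \(K\) partitions with per-partition hash tables filtering for locally leftmost occurrences, feeding only the surviving \(K\)-approximation of candidates into the sort-and-prefix-sum machinery of the preceding theorem, with the space bound coming from each distinct block having at most one representative per partition (\(O(s+K(z\tau))\)) and the time bound from the \(O(n/K)\) local filtering bottleneck per level. In fact, your write-up is more explicit than the paper's own proof, which states these same two points (filtering as the new time bottleneck; sequential-construction space \(O(s+z\tau)\) per partition since a fingerprint may occur in every partition) only in compressed form, and the cross-partition second-phase issue you flag is likewise left implicit there.
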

\begin{proof}
  The argument for the running time is similar as in the previous approach.
  The main difference being that on each level, we now have to filter locally introducing a new bottleneck.

  However, filtering helps us to significantly reduce the amount of working memory necessary during construction.
  Since we now can control how often a fingerprint occurs overall while finding the leftmost occurrence.
  Therefore, the required space is the same as during the sequential construction, i.e., \(O(s+z\tau)\)~\cite{BelazzouguiCGGK2021BlockTrees} for \emph{each} partition.
  Because in the worst case, the fingerprint occurs in all partitions at least once.
\end{proof}

\begin{lemma}
  For any integer \(K>0\), in the Common-CRCW, the block tree of a text \(T\in[1,\sigma]^n\) can be computed in \(O(\frac{n}{K}\frac{\log n}{\log\log n}\log_\tau\frac{n\log\sigma}{s\log n})\) time and \(O(n\log n(1+\log_\tau\frac{n\log\sigma}{s\log n}))\) work in \(O(s+K(z\tau))\) words of space.
\end{lemma}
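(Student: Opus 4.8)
The plan is to reuse the domain-decomposition analysis of the preceding (EREW) theorem essentially verbatim, and to change only the primitives: wherever that proof invokes an EREW sort or prefix sum, I substitute the faster Common-CRCW counterparts, exactly as the sorting-based Common-CRCW lemma was obtained from its EREW version. Three quantities must be tracked. The work bound and the space bound are inherited unchanged, since work counts the total number of operations and is therefore model-independent, and the $K$ local filters together with the single global root contribution still occupy $O(s+K(z\tau))$ words regardless of how the PEs synchronise. Hence only the time bound is affected, and the whole argument reduces to re-deriving the per-level running time under concurrent memory access.

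On each level the work splits into three parts: (i) locally filtering, within every one of the $K$ partitions, the pairs of consecutive blocks (first phase) or the scanned length-$b_\ell$ substrings (second phase) down to their locally leftmost occurrences; (ii) sorting the surviving fingerprint--position tuples so that equal fingerprints become consecutive; and (iii) a prefix sum over the array $\mathit{occ}$ to recover the pointers and offsets. As in the EREW domain-decomposition argument, the local filtering is the new bottleneck and multiplies the per-level primitive cost by a factor $n/K$. For the remaining two parts I would invoke \cref{lem:sorting_cw} for sorting and \cref{lem:prefix_sum} for the prefix sum; taking the dominant term, both run in $O(\tfrac{\log n}{\log\log n})$ time, while the fingerprint verification drops to $O(1)$ time under concurrent reads. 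Thus each per-level operation costs $O(\tfrac{n}{K}\cdot\tfrac{\log n}{\log\log n})$ time in the Common-CRCW model.

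It then remains to sum over the levels. By \cref{lem:blocks_per_level} the total block length decays geometrically once the level index exceeds $1+\log_\tau\tfrac{3z}{s}$, so only $O(\log_\tau\tfrac{n\log\sigma}{s\log n})$ levels contribute asymptotically; multiplying the per-level cost by this number of levels yields the claimed time $O(\tfrac{n}{K}\,\tfrac{\log n}{\log\log n}\,\log_\tau\tfrac{n\log\sigma}{s\log n})$. The space follows as in the EREW case: each partition behaves like an independent sequential construction needing $O(s+z\tau)$ words, and in the worst case a fingerprint survives the local filter in all $K$ partitions, giving $O(s+K(z\tau))$ words overall.

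I expect the only genuine obstacle to be the interaction of the Common-CRCW write restriction with the filtering and leftmost-occurrence steps: one must check that none of these steps ever forces two PEs to write \emph{different} values to the same cell simultaneously. The resolution is that the construction is deliberately arranged so that every ``which occurrence wins'' decision is deferred to the sorted order and to the $\mathit{occ}$/prefix-sum computation, rather than being settled by racing writes of distinct positions; the only concurrent writes that remain are equal-valued mismatch flags during fingerprint verification, which the Common model permits. Verifying this carefully is what distinguishes the Common-CRCW proof from a mechanical transcription of the EREW one.
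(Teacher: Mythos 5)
Your proposal takes exactly the route the paper intends: the paper states this Common-CRCW lemma without any proof of its own, presenting it as an immediate consequence of rerunning the domain-decomposition argument of the preceding EREW theorem with the concurrent-access primitives (\cref{lem:sorting_cw,lem:prefix_sum}) substituted for the EREW ones---which is precisely your plan. Your accounting of the three quantities (work unchanged; space from the \(K\) local filters at \(O(s+z\tau)\) words each, with a fingerprint possibly surviving in all \(K\) partitions; time as per-level primitive cost multiplied by the \(n/K\) filtering bottleneck and summed over \(O(\log_\tau\frac{n\log\sigma}{s\log n})\) levels) mirrors the paper's reasoning for the EREW theorem and for the earlier sorting-based Common-CRCW lemma.

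One caveat deserves mention, though it is inherited from the paper rather than introduced by you: your claim that sorting and prefix sums ``both run in \(O(\frac{\log n}{\log\log n})\) time'' does not actually follow from the cited primitives. \cref{lem:sorting_cw} gives sorting in \(O(\frac{\log n}{\log\log\log n})\) time, which is asymptotically \emph{larger} than \(O(\frac{\log n}{\log\log n})\) because the denominator is smaller; only the prefix sum of \cref{lem:prefix_sum} runs in \(O(\frac{\log n}{\log\log n})\). Taken literally, the per-level bottleneck would be the sort, and the derived time bound would be \(O(\frac{n}{K}\frac{\log n}{\log\log\log n}\log_\tau\frac{n\log\sigma}{s\log n})\), slightly weaker than claimed. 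The paper's lemma statement exhibits the same tension with its own sorting primitive, so your gloss reproduces the paper's claim rather than contradicting it; but a fully self-contained proof would need either a CRCW sorting routine running in \(O(\frac{\log n}{\log\log n})\) time (with suitable work) or an adjustment of the stated time bound. Your closing observation that the Common-CRCW write restriction is respected---because all ``which occurrence wins'' decisions are resolved through sorted order and the \(\mathit{occ}\) prefix sum rather than by racing writes---is a point the paper never makes explicit, and it is a worthwhile addition.
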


\subsection{Data-Parallel Karp-Rabin Fingerprints}
\label{sec:simd_karp_rabin}

We accelerate the computation of Karp-Rabin fingerprints using SIMD techniques.
An implementation was initially designed for the Intel AVX2 instruction set extension and later ported to AArch64~ASIMD.
We use \(q=2^{32}\), \( r=33\), \(\sigma=256\) (like the djb2 hash function).

As AVX2~vectors are 32~bytes long, they can hold 8~hashes of 32~bits each.
In each iteration of our function, 16~new hashes~\(\phi(s..e)\), \(\phi(s+1,e+1)\), \dots, \(\phi(s+15,e+15)\) stored in 2~vectors are computed based the value of the previous 16~hashes according to the scheme
\begin{equation}
\phi(s\,..\,e)=\phi(s-16\,..\,e-16)\cdot r^{16}-\phi(s-16\,..\,s-1)\cdot r^{e-s+1}+\phi(e-15\,..\,e).
\end{equation}

\begin{figure}
\centering
\includegraphics[width=\textwidth]{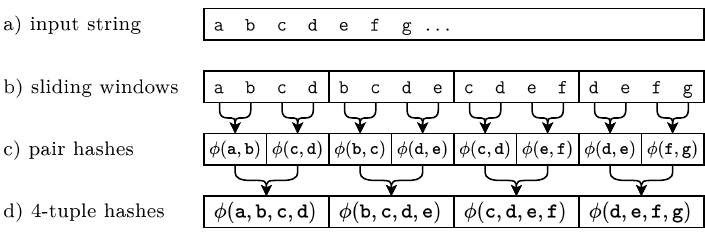}
\caption{Computing the hashes of 4-tuples using 1~permutation and 2~pairwise multiply-adds}
\label{fig:karp_rabin_hash4}
\end{figure}

The 16-tuple hashes \(\phi(e-15\,..\,e)\) to \(\phi(e\,..\,e+15)\) are computed in two steps.
First, hashes of 4-tuples are computed as shown in \cref{fig:karp_rabin_hash4}:
A chunk of input~(a) is loaded into a SIMD~register and permuted to form sliding windows~(b) of 4~characters each.
Using a data-parallel pairwise multiply-add instruction, pairs of characters~\((T[i],T[i+1])\) are multiplied with~\(( r,1)\) to give pair hashes~(c) following~\(\phi(i,i+1)=T[i]\cdot r+T[i+1]\).
This step is repeated by multiply-adding pairs of pair hashes with \(( r^2,1)\), producing 4-tuple hashes~(d).
As~$ r<256$, the pair hashes fit into 16~bits, preserving vector length through steps~(a) to~(d).

\begin{figure}
\centering
\includegraphics[width=\textwidth]{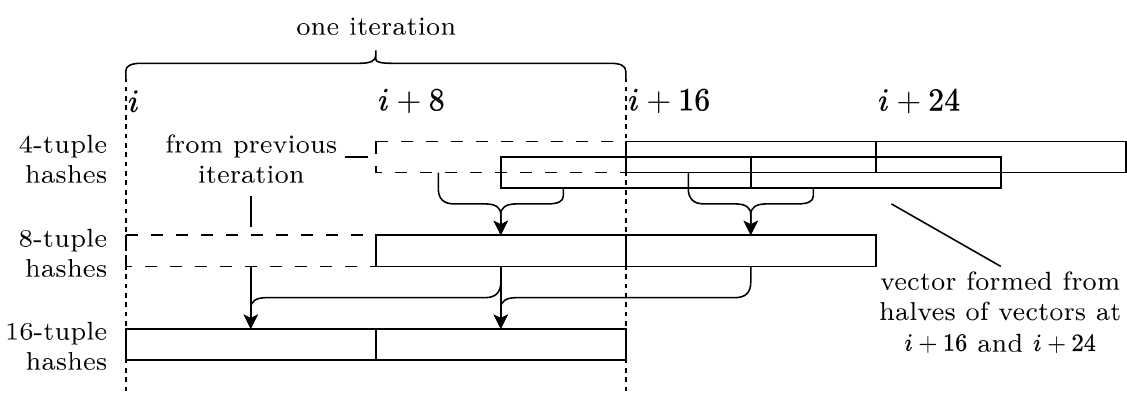}
\caption{Computing 16~hashes of 16-tuples in vectors of 8~hashes each}
\label{fig:karp_rabin_pruned}
\end{figure}

These 4-tuple hashes are then combined into 8-tuple hashes and finally 16-tuple hashes according to \cref{fig:karp_rabin_pruned}, using elementwise operations.
Some intermediate values are carried over from the previous iteration to reduce the number of operations.
Vectors of 4-tuple hashes offset by~4 needed to compute the 8-tuple hashes are formed by joining the rear half of the previous vector with the front half of the next vector.

The 16-character hashes are cached in a ring buffer, so they can be subtracted once the start of the window reaches the current end.
For~17--32 characters, a variant of the above algorithm is used, keeping the ring buffer in vector registers to avoid costly store-reload latency.
For hashes of up to 16~characters, dedicated routines are used to directly compute each hash from the value of the characters involved.

The total computational load per iteration comprises 4~pairwise multiply-add operations, 8~multiplications, 8~additions, and 6 permutations per iteration.
However, due to high data-parallelism and the high latency of 32-bit SIMD multiplications on current Intel microarchitectures, throughput is neverthless latency-bound at 11~cycles per 16~characters (10~cycles to compute the product \(\phi(s-16\,..\,e-16)\cdot r^{16}\), 1~cycle to add the remaining terms)~\cite{Abel19a}.
This compares favorably with the \(\mathbin{\sim}2.5\)~cycles per character throughput of computing Karp-Rabin fingerprints conventionally,
i.\,e.~through the relation~\cite{KarpR1987Fingerprints}
\begin{equation*}
\phi(s\,..\,e)=\phi(s-1\,..\,e-1)\cdot r-T[s-1]\cdot r^{e-s+1}+T[e],
\end{equation*}
and yields a 5\,\%~improvement in performance over all tested configurations of our algorithm.

\subsection{Further Implementation Details.}
\label{sec:implementation_details}

We implemented our block tree in C++ using the same interface as previous block tree implementations~\cite{BelazzouguiCGGK2021BlockTrees,KopplKM2023LPFBlockTrees}.
A practical implementation of a block tree consists of only one bit vector per level (to mark inner nodes) as well as pointers and offsets to earlier occurrences for inner nodes.
We implemented the memory efficient algorithm based on domain decomposition that we present in \cref{sec:memory_efficient_bt}, as the number of PEs we can use is determined by the hardware and usually significantly smaller than the input size.
Preliminary experiments showed that, any number of partitions not equal to the number of PEs is detrimental to the running time of the algorithm.

In our implementation, we are not sorting.
While sorting eases the analysis of the algorithm, it is not necessary in practice.
Instead, each PE is responsible for a local partition of the current level.
Initially, the blocks are only marked regarding occurrences in the local partition, i.e., we use a hash table (\texttt{ankerl::unordered\_dense::map}) as filter to only consider the leftmost occurrence locally.

These marked blocks are then communicated to the responsible PE\@.
To this end, we use a wait-free multiple producer single consumer (MPSC) queue~\cite{AdasF2020Jiffy}.
Since the fingerprints are random, they are used to determine the PE that collects the blocks by a modulo operation.
Note that in practice, this is very balanced and there is no imbalance.
As soon as we have marked all blocks globally, we can continue with the second phase in the same fashion.

Another interesting observation is that the size of the MPSC queue does not affect the running time of the algorithm, after a certain size.
In our experimental evaluation, in \cref{sec:experimental_evaluation}, we use a queue of size 512 fingerprints.
Smaller queues led to imbalances between different PEs, but, since there were no imbalances with this size, larger queues did not result in any measurable improvement.

\section{Experimental Evaluation}
\label{sec:experimental_evaluation}
We conducted our experiment on a server equipped with an AMD EPYC 7713 CPU (64 physical cores with hyperthreading support running at 2.0\,GHz base frequency with 3.66\,GHz turbo boost, 64\,KB L1 and 512\,KB L2 cache per core and 256\,MB shared L3 cache) and 1024\,GB DDR4 RAM.
The server is running Ubuntu 20.04 (kernel version 5.15.0).
All code has been compiled using the GNU Compiler Collection (GCC) version 11.4.0 using the provided build scripts.

\newcommand{\lpfbt}{\texttt{LPF-BT}}
\newcommand{\parbt}{\texttt{Par-BT}}
\newcommand{\parlz}{\texttt{Par-LZ}}
\newcommand{\pigz}{\texttt{pigz}}
\newcommand{\pxz}{\texttt{pxz}}

We compare our new algorithm (\parbt, see \cref{sec:parallel_block_tree_construction}) with the state-of-the-art block tree construction algorithm (\lpfbt)~\cite{KopplKM2023LPFBlockTrees}.\footnote{\url{https://github.com/pasta-toolbox/block_tree}; archived at Zenodo~\cite{PASTA_BLOCK_TREE_ARTIFACT}, last accessed 2024-10-09.}
Since there are no other direct competitors, we also include a state-of-the-art parallel LZ77 compression algorithm (\parlz)~\cite{ShunZ2013ParallelLZ},\footnote{\url{https://github.com/zfy0701/Parallel-LZ77}, last accessed 2024-10-09.}
the parallel LZMA compressor \pxz~\cite{Jnovy2024pxz}, and a parallel gzip implementation \pigz~\cite{Adler2024pigz}.
Note that the the latter three algorithms only compress the input---without any support for random access and rank/select queries.
However, they give us insights in a similar problem, as block trees are a LZ77 approximation with additional query support, i.e., on the one hand they relax the problem but on the other hand they require additional information.
We do not include the implementation of the original block tree construction algorithm~\cite{BelazzouguiCGGK2021BlockTrees}, as it is only sequential and up to an order of magnitude slower than \lpfbt~\cite{KopplKM2023LPFBlockTrees}.
We used the fastest configuration of \lpfbt\ and chose the same parameters (arity) for \parbt, ensuring that both algorithms compute the same block tree.

As inputs, we are using the real-world repetitive texts of the Pizza\&Chili corpus\footnote{\url{https://pizzachili.dcc.uchile.cl/repcorpus.html}, last accessed 2024-10-09.}, which is commonly used in the field of text indexing.
The same and very similar (but smaller) inputs have been used by the already existing algorithms used in this evaluation~\cite{KopplKM2023LPFBlockTrees,ShunZ2013ParallelLZ}.
Details on the inputs are given in \cref{tab:text_details}.

\begin{table}
  \caption{Name, size \(n\), alphabet size \(\sigma\), number of LZ77 factors \(z\), and average LZ77 factor length \(\ceil{n/z}\) of the inputs used in the experimental evaluation.}
  \label{tab:text_details}
  \centering
  \begin{tabular}{lrrrr}
    \toprule
    Name & \(n\) & \(\sigma\) & \(z\) & \(\ceil{n/z}\)\\
    \midrule
    cere & 461286644 & 5 & 1700630 & 272 \\
    coreutils & 205281778 & 236 & 1446468 & 142 \\
    einstein.de & 92758441 & 117 & 34572 & 2684\\
    einstein.en & 467626544 & 139 & 89467 & 5227\\
    escherichia & 112689515 & 15 & 2078512 & 55\\
    influenza & 154808555 & 15  & 769286 & 202\\
    kernel & 257961616 & 160 & 793915 & 325\\
    para & 429265758 & 5 & 2332657 & 185\\
    world\_leaders & 46968181 & 89 & 175740 & 268\\
    \bottomrule
  \end{tabular}
\end{table}

\begin{figure*}[h!]
  \centering
  \begin{tikzpicture}
  \begin{groupplot}[
    group style={
      group size= 3 by 3,
      horizontal sep=1cm,
      vertical sep=1.25cm,
    },
    width=5cm,
    height=6.5cm,
    xmode=log,
    ymode=log,
    log basis x={2},
    log basis y={10},
    xtick={1,2,4,8,16,32,64},
    xticklabels={1,2,4,8,16,32,64},
    yticklabels={1,10,100},
    ymin=1,
    ymax=100,
    cycle list name=colorList,
    every axis title/.append style={font=\bfseries}
    ]
    \nextgroupplot[
    title={cere},
    ylabel={throughput (MiB/s)},
    xmajorticks=false,
    ]
    \addplot coordinates { (1,1.43464) (2,2.60148) (4,4.3145) (8,5.6264) (16,9.43579) (32,12.6879) (64,17.3005) };
    \addlegendentry{algo=plz77}
    
    \addplot coordinates { (1,6.23367) (2,8.99176) (4,11.7228) (8,11.2112) (16,11.3234) (32,11.6853) (64,10.3281) };
    \addlegendentry{algorithm=LPF-pruning-1}

    \addplot coordinates { (1,6.03428) (2,11.9962) (4,23.0984) (8,42.609) (16,69.4211) (32,109.901) (64,161.395) };
    \addlegendentry{algo=pigz}
    
    \addplot coordinates { (1,1.74071) (2,2.89648) (4,5.46765) (8,8.94132) (16,12.7497) (32,21.6462) (64,20.6743) };
    \addlegendentry{algo=pxz}

    \addplot coordinates { (1,4.07482) (2,7.3373) (4,13.4183) (8,20.2372) (16,28.794) (32,38.6835) (64,46.4882) };
    \addlegendentry{algo=rec\_shard}
    
    \legend{}

    \nextgroupplot[
    title={coreutils},
    ticks=none,
    ]
    \addplot coordinates { (1,1.36842) (2,2.52346) (4,4.13136) (8,5.22088) (16,8.43243) (32,12.2066) (64,16.0494) };
    \addlegendentry{algo=plz77}
    
    \addplot coordinates { (1,7.30659) (2,10.4562) (4,13.8406) (8,12.9885) (16,13.1857) (32,12.6314) (64,12.1885) };
    \addlegendentry{algorithm=LPF-pruning-1}

    \addplot coordinates { (1,30.0288) (2,55.484) (4,92.2778) (8,145.836) (16,192.59) (32,213.827) (64,202.804) };
    \addlegendentry{algo=pigz}
    
    \addplot coordinates { (1,5.2942) (2,9.59064) (4,18.8954) (8,31.1179) (16,37.0361) (32,36.183) (64,36.5139) };
    \addlegendentry{algo=pxz}

    \addplot coordinates { (1,4.73126) (2,8.73284) (4,15.8463) (8,24.4652) (16,35.1415) (32,45.8689) (64,56.8803) };
    \addlegendentry{algo=rec\_shard}
    
    \legend{}

    \nextgroupplot[
    title={einstein.de},
    ticks=none,
    ]
    \addplot coordinates { (1,1.53444) (2,2.80255) (4,4.55959) (8,5.92593) (16,9.62801) (32,14.0631) (64,17.9867) };
    \addlegendentry{algo=plz77}
    
    \addplot coordinates { (1,7.05241) (2,10.4401) (4,14.0021) (8,12.9032) (16,12.7569) (32,13.0491) (64,11.375) };
    \addlegendentry{algorithm=LPF-pruning-1}

    \addplot coordinates { (1,21.1375) (2,39.8001) (4,67.1523) (8,108.572) (16,144.423) (32,172.9) (64,164.748) };
    \addlegendentry{algo=pigz}
    
    \addplot coordinates { (1,14.102) (2,25.2375) (4,47.974) (8,47.9699) (16,47.3271) (32,47.3278) (64,46.0973) };
    \addlegendentry{algo=pxz}

    \addplot coordinates { (1,4.96663) (2,9.06166) (4,15.7017) (8,25.6691) (16,32.1227) (32,40.5998) (64,46.2185) };
    \addlegendentry{algo=rec\_shard}
    
    \legend{}

    \nextgroupplot[
    title={einstein.en},
    ylabel={throughput (MiB/s)},
    xmajorticks=false,
    ]
    \addplot coordinates { (1,1.27965) (2,2.33596) (4,3.85281) (8,5.15195) (16,8.62821) (32,12.9927) (64,16.5428) };
    \addlegendentry{algo=plz77}
    
    \addplot coordinates { (1,6.41257) (2,9.59719) (4,12.6591) (8,11.712) (16,11.1088) (32,12.4178) (64,10.9682) };
    \addlegendentry{algorithm=LPF-pruning-1}

    \addplot coordinates { (1,21.1056) (2,39.0193) (4,67.427) (8,108.606) (16,144.942) (32,178.368) (64,174.861) };
    \addlegendentry{algo=pigz}
    
    \addplot coordinates { (1,14.0717) (2,26.6394) (4,48.8015) (8,75.5476) (16,101.928) (32,170.392) (64,170.403) };
    \addlegendentry{algo=pxz}

    \addplot coordinates { (1,5.15242) (2,9.18008) (4,16.9624) (8,25.7114) (16,37.7047) (32,56.8055) (64,80.0468) };
    \addlegendentry{algo=rec\_shard}
    
    \legend{}

    \nextgroupplot[
    title={escherichia},
    ticks=none,
    ]
    \addplot coordinates { (1,1.32952) (2,2.80472) (4,4.79283) (8,5.86301) (16,9.57494) (32,13.4931) (64,16.6927) };
    \addlegendentry{algo=plz77}
    
    \addplot coordinates { (1,6.34206) (2,9.10464) (4,12.2244) (8,11.2572) (16,10.7376) (32,10.6312) (64,10.0094) };
    \addlegendentry{algorithm=LPF-pruning-1}

    \addplot coordinates { (1,5.72491) (2,11.2689) (4,21.6632) (8,39.4496) (16,62.5206) (32,96.3343) (64,136.183) };
    \addlegendentry{algo=pigz}
    
    \addplot coordinates { (1,2.02034) (2,3.03942) (4,5.63679) (8,7.5995) (16,7.0257) (32,7.25212) (64,7.26564) };
    \addlegendentry{algo=pxz}

    \addplot coordinates { (1,4.43749) (2,7.61796) (4,12.5546) (8,19.223) (16,23.5644) (32,29.4949) (64,33.9332) };
    \addlegendentry{algo=rec\_shard}
    
    \legend{}

    \nextgroupplot[
    title={influenza},
    ticks=none,
    ]
    \addplot coordinates { (1,1.55679) (2,2.92392) (4,4.83553) (8,6.18947) (16,10.0) (32,14.1414) (64,17.4533) };
    \addlegendentry{algo=plz77}
    
    \addplot coordinates { (1,6.22716) (2,8.94746) (4,12.2987) (8,11.3393) (16,10.7227) (32,10.7137) (64,9.49413) };
    \addlegendentry{algorithm=LPF-pruning-1}

    \addplot coordinates { (1,15.3185) (2,29.9137) (4,56.2897) (8,104.237) (16,161.604) (32,231.901) (64,251.908) };
    \addlegendentry{algo=pigz}
    
    \addplot coordinates { (1,6.65577) (2,12.3554) (4,19.7536) (8,38.3162) (16,37.5763) (32,37.5767) (64,37.5811) };
    \addlegendentry{algo=pxz}

    \addplot coordinates { (1,4.11218) (2,7.12123) (4,12.1588) (8,18.3475) (16,24.1886) (32,32.2421) (64,38.4716) };
    \addlegendentry{algo=rec\_shard}
    
    \legend{}

    \nextgroupplot[
    title={kernel},
    xlabel={number of PEs},
    ylabel={throughput (MiB/s)}
    ]
    \addplot coordinates { (1,1.33153) (2,2.4705) (4,4.02619) (8,5.20635) (16,8.3959) (32,12.535) (64,16.1047) };
    \addlegendentry{algo=plz77}
    
    \addplot coordinates { (1,7.24126) (2,10.3875) (4,14.0311) (8,14.4998) (16,13.6081) (32,13.026) (64,12.0641) };
    \addlegendentry{algorithm=LPF-pruning-1}

    \addplot coordinates { (1,28.0441) (2,52.1564) (4,89.0114) (8,139.472) (16,187.415) (32,220.924) (64,194.766) };
    \addlegendentry{algo=pigz}
    
    \addplot coordinates { (1,4.77871) (2,7.69445) (4,14.5307) (8,18.2156) (16,34.4375) (32,33.3787) (64,34.1969) };
    \addlegendentry{algo=pxz}

    \addplot coordinates { (1,5.85063) (2,10.8265) (4,18.6057) (8,29.1297) (16,42.7287) (32,57.6822) (64,69.723) };
    \addlegendentry{algo=rec\_shard}
    
    \legend{}

    \nextgroupplot[
    title={para},
    xlabel={number of PEs},
    ymajorticks=false,
    ]
    \addplot coordinates { (1,1.41278) (2,2.58044) (4,4.25598) (8,5.59508) (16,8.82893) (32,13.4539) (64,17.6103) };
    \addlegendentry{algo=plz77}
    
    \addplot coordinates { (1,6.18708) (2,8.98452) (4,12.2032) (8,10.8233) (16,12.4184) (32,11.7465) (64,10.2051) };
    \addlegendentry{algorithm=LPF-pruning-1}

    \addplot coordinates { (1,5.8409) (2,11.5884) (4,22.3139) (8,41.1956) (16,67.9317) (32,106.375) (64,161.062) };
    \addlegendentry{algo=pigz}
    
    \addplot coordinates { (1,1.66523) (2,2.6831) (4,4.88781) (8,8.37312) (16,11.3594) (32,18.9908) (64,18.5553) };
    \addlegendentry{algo=pxz}

    \addplot coordinates { (1,3.2591) (2,6.35661) (4,11.8786) (8,18.6017) (16,26.8545) (32,39.0677) (64,47.2096) };
    \addlegendentry{algo=rec\_shard}
    
    \legend{}

    \nextgroupplot[
    title={world\_leaders},
    xlabel={number of PEs},
    legend to name={leg:throughput},
    legend columns=5,
    ymajorticks=false,
    ]
    \addplot coordinates { (1,1.77419) (2,3.14286) (4,5.13869) (8,6.09841) (16,9.72913) (32,13.9241) (64,16.7619) };
    \addlegendentry{\parlz~\cite{ShunZ2013ParallelLZ}}
    
    \addplot coordinates { (1,7.65883) (2,10.8709) (4,14.7108) (8,16.4302) (16,15.0107) (32,13.0477) (64,11.2849) };
    \addlegendentry{\lpfbt~\cite{KopplKM2023LPFBlockTrees}}

    \addplot coordinates { (1,39.6985) (2,72.3959) (4,107.927) (8,142.909) (16,211.938) (32,212.136) (64,171.014) };
    \addlegendentry{\pigz~\cite{Adler2024pigz}}
    
    \addplot coordinates { (1,7.69816) (2,14.256) (4,14.257) (8,14.1405) (16,14.1372) (32,14.141) (64,14.1409) };
    \addlegendentry{\pxz~\cite{Jnovy2024pxz}}

    \addplot coordinates { (1,4.49794) (2,8.27145) (4,13.8615) (8,21.1158) (16,25.9894) (32,31.3446) (64,36.6209) };
    \addlegendentry{\parbt\ [here]}
    
  \end{groupplot}
\end{tikzpicture}

\vspace{.5cm}
\begin{centering}
  \ref*{leg:throughput}
\end{centering}

  \caption{Throughput (MiB/s) of the parallel block tree construction algorithms and the parallel general purpose compression algorithms (dashed lines) in a strong scaling experiment.}
  \label{fig:throughput}
\end{figure*}
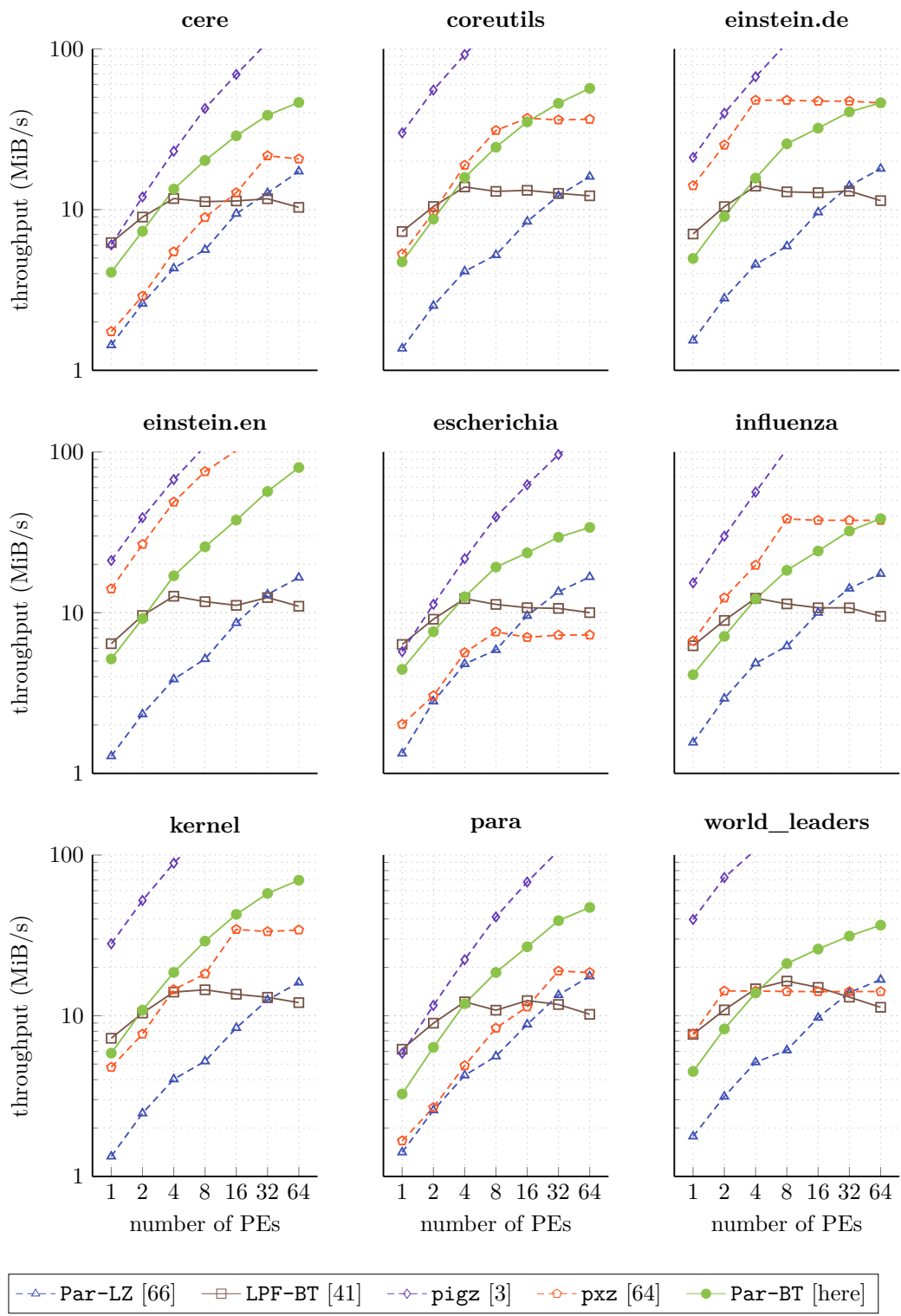

\begin{figure*}[h!]
  \centering
  \begin{tikzpicture}
  \begin{groupplot}[
    group style={
      group size= 3 by 3,
      horizontal sep=1cm,
      vertical sep=1.25cm,
    },
    width=5cm,
    height=6.25cm,
    xmode=log,
    ymode=log,
    log basis x={2},
    log basis y={10},
    xtick={1,2,4,8,16,32,64},
    xticklabels={1,2,4,8,16,32,64},
    ymin=1,
    ymax=10000,
    cycle list name=colorList,
    every axis title/.append style={font=\bfseries}
    ]
    \nextgroupplot[
    title={cere},
    ylabel={space-overhead (\%)},
    xmajorticks=false,
    ]
    \addplot coordinates { (1,5201.02) (2,5350.67) (4,5351.34) (8,5263.38) (16,5645.39) (32,5267.47) (64,5467.64) };
    \addlegendentry{algo=plz77}

    \addplot coordinates { (1,3700.97) (2,3701.01) (4,3700.98) (8,3701.05) (16,3701.09) (32,3701.15) (64,3701.22) };
    \addlegendentry{algorithm=LPF-pruning-1}

    \addplot coordinates { (1,0.0) (2,0.0) (4,0.0) (8,0.600928) (16,2.32632) (32,4.20086) (64,7.83071) };
    \addlegendentry{algo=pigz}

    \addplot coordinates { (1,26.3242) (2,52.8428) (4,104.777) (8,183.697) (16,366.613) (32,416.487) (64,417.562) };
    \addlegendentry{algo=pxz}

    \addplot coordinates { (1,295.184) (2,302.093) (4,321.243) (8,306.814) (16,342.511) (32,343.576) (64,344.13) };
    \addlegendentry{algo=rec\_shard}
    \legend{}
    
    \nextgroupplot[
    title={coreutils},
    ticks=none,
    ]
    \addplot coordinates { (1,5202.41) (2,5468.88) (4,5205.4) (8,5513.52) (16,5202.42) (32,5202.62) (64,5202.97) };
    \addlegendentry{algo=plz77}
    
    \addplot coordinates { (1,3702.17) (2,3702.21) (4,3702.33) (8,3702.33) (16,3702.46) (32,3702.48) (64,3702.85) };
    \addlegendentry{algorithm=LPF-pruning-1}

    \addplot coordinates { (1,0.0) (2,0.0) (4,1.21686) (8,1.4882) (16,3.94044) (32,8.4479) (64,14.2843) };
    \addlegendentry{algo=pigz}

    \addplot coordinates { (1,59.7871) (2,118.603) (4,238.251) (8,472.626) (16,475.17) (32,476.606) (64,473.584) };
    \addlegendentry{algo=pxz}

    \addplot coordinates { (1,321.988) (2,335.578) (4,338.779) (8,343.703) (16,345.236) (32,344.29) (64,343.091) };
    \addlegendentry{algo=rec\_shard}
    \legend{}

    \nextgroupplot[
    title={einstein.de},
    ticks=none,
    ]
    \addplot coordinates { (1,5337.94) (2,5602.41) (4,5602.71) (8,5260.07) (16,5204.99) (32,5205.17) (64,5198.35) };
    \addlegendentry{algo=plz77}
    
    \addplot coordinates { (1,3704.7) (2,3704.87) (4,3704.97) (8,3704.99) (16,3705.12) (32,3705.49) (64,3706.06) };
    \addlegendentry{algorithm=LPF-pruning-1}

    \addplot coordinates { (1,0.0) (2,2.83856) (4,2.83856) (8,3.44335) (16,10.4034) (32,17.3752) (64,28.7154) };
    \addlegendentry{algo=pigz}

    \addplot coordinates { (1,130.914) (2,260.813) (4,512.242) (8,513.62) (16,512.229) (32,514.343) (64,514.141) };
    \addlegendentry{algo=pxz}

    \addplot coordinates { (1,262.168) (2,260.12) (4,261.016) (8,250.528) (16,246.369) (32,245.971) (64,243.512) };
    \addlegendentry{algo=rec\_shard}
    \legend{}

    \nextgroupplot[
    title={einstein.en},
    ylabel={space-overhead (\%)},
    xmajorticks=false,
    ]
    \addplot coordinates { (1,5201.34) (2,5200.95) (4,5200.98) (8,5511.22) (16,5400.89) (32,5267.27) (64,5267.73) };
    \addlegendentry{algo=plz77}
    
    \addplot coordinates { (1,3700.95) (2,3700.95) (4,3701.04) (8,3701.05) (16,3701.09) (32,3701.17) (64,3701.29) };
    \addlegendentry{algorithm=LPF-pruning-1}

    \addplot coordinates { (1,0.0) (2,0.0) (4,0.563056) (8,0.59385) (16,1.62288) (32,3.44826) (64,5.30081) };
    \addlegendentry{algo=pigz}

    \addplot coordinates { (1,26.0092) (2,52.9452) (4,104.347) (8,205.232) (16,411.435) (32,486.303) (64,486.34) };
    \addlegendentry{algo=pxz}
    
    \addplot coordinates { (1,285.318) (2,287.985) (4,283.86) (8,266.697) (16,272.578) (32,270.34) (64,263.121) };
    \addlegendentry{algo=rec\_shard}
    \legend{}

    \nextgroupplot[
    title={escherichia},
    ticks=none,
    ]
    \addplot coordinates { (1,5337.1) (2,5203.88) (4,5258.8) (8,5220.28) (16,5203.74) (32,5204.42) (64,5270.88) };
    \addlegendentry{algo=plz77}
    
    \addplot coordinates { (1,3703.84) (2,3704.04) (4,3704.16) (8,3704.42) (16,3704.49) (32,3704.69) (64,3704.95) };
    \addlegendentry{algorithm=LPF-pruning-1}

    \addplot coordinates { (1,0.0) (2,0.0) (4,2.46429) (8,4.20092) (16,7.66176) (32,14.5967) (64,32.0544) };
    \addlegendentry{algo=pigz}

    \addplot coordinates { (1,110.907) (2,190.752) (4,379.022) (8,457.646) (16,455.112) (32,467.354) (64,459.935) };
    \addlegendentry{algo=pxz}

    \addplot coordinates { (1,624.485) (2,618.517) (4,658.303) (8,666.543) (16,669.72) (32,669.476) (64,664.028) };
    \addlegendentry{algo=rec\_shard}
    \legend{}

    \nextgroupplot[
    title={influenza},
    ticks=none,
    ]
    \addplot coordinates { (1,5335.98) (2,5424.43) (4,5279.98) (8,5484.16) (16,5202.98) (32,5203.31) (64,5268.93) };
    \addlegendentry{algo=plz77}
    
    \addplot coordinates { (1,3702.87) (2,3702.85) (4,3703.07) (8,3703.11) (16,3703.07) (32,3703.29) (64,3703.74) };
    \addlegendentry{algorithm=LPF-pruning-1}

    \addplot coordinates { (1,0.0) (2,0.0) (4,1.70081) (8,2.24342) (16,5.49001) (32,12.073) (64,20.5867) };
    \addlegendentry{algo=pigz}

    \addplot coordinates { (1,78.2902) (2,158.809) (4,275.078) (8,419.255) (16,415.721) (32,415.988) (64,417.514) };
    \addlegendentry{algo=pxz}
    
    \addplot coordinates { (1,534.194) (2,549.815) (4,523.894) (8,596.71) (16,601.11) (32,602.396) (64,598.996) };
    \addlegendentry{algo=rec\_shard}
    \legend{}

    \nextgroupplot[
    title={kernel},
    xlabel={number of PEs},
    ylabel={space-overhead (\%)}
    ]
    \addplot coordinates { (1,5246.12) (2,5334.72) (4,5334.7) (8,5459) (16,5201.59) (32,5201.25) (64,5202.85) };
    \addlegendentry{algo=plz77}
    
    \addplot coordinates { (1,3701.68) (2,3701.71) (4,3701.83) (8,3701.84) (16,3701.85) (32,3702.04) (64,3702.19) };
    \addlegendentry{algorithm=LPF-pruning-1}

    \addplot coordinates { (1,0.0) (2,0.968361) (4,1.02225) (8,2.15769) (16,4.16147) (32,5.19845) (64,7.80077) };
    \addlegendentry{algo=pigz}

    \addplot coordinates { (1,47.7172) (2,95.9383) (4,189.041) (8,375.213) (16,471.532) (32,473.55) (64,473.965) };
    \addlegendentry{algo=pxz}

    \addplot coordinates { (1,258.45) (2,260.85) (4,259.704) (8,252.994) (16,252.692) (32,245.577) (64,240.183) };
    \addlegendentry{algo=rec\_shard}
    \legend{}

    \nextgroupplot[
    title={para},
    xlabel={number of PEs},
    ymajorticks=false,
    ]
    \addplot coordinates { (1,5201.11) (2,5467.37) (4,5440.34) (8,5228.33) (16,5467.83) (32,5732.74) (64,5732.87) };
    \addlegendentry{algo=plz77}
    
    \addplot coordinates { (1,3701.02) (2,3701.04) (4,3701.09) (8,3701.11) (16,3701.17) (32,3701.17) (64,3701.36) };
    \addlegendentry{algorithm=LPF-pruning-1}

    \addplot coordinates { (1,0.0) (2,0.580992) (4,0.679299) (8,1.29663) (16,1.94728) (32,3.92927) (64,7.92726) };
    \addlegendentry{algo=pigz}

    \addplot coordinates { (1,28.5252) (2,57.2701) (4,112.818) (8,196.701) (16,396.093) (32,419.973) (64,420.031) };
    \addlegendentry{algo=pxz}

    \addplot coordinates { (1,319.682) (2,327.193) (4,347.883) (8,340.873) (16,372.267) (32,374.069) (64,372.474) };
    \addlegendentry{algo=rec\_shard}
    \legend{}

    \nextgroupplot[
    title={world\_leaders},
    xlabel={number of PEs},
    ymajorticks=false,
    ]
    \addplot coordinates { (1,5342.7) (2,5296.5) (4,5296.56) (8,5520.56) (16,5274.91) (32,5279.14) (64,5274.57) };
    \addlegendentry{algo=plz77}
    
    \addplot coordinates { (1,3709.13) (2,3709.71) (4,3709.95) (8,3710.18) (16,3710.28) (32,3711.07) (64,3712) };
    \addlegendentry{algorithm=LPF-pruning-1}

    \addplot coordinates { (1,0.0) (2,0.0) (4,6.19994) (8,6.5044) (16,17.2393) (32,28.6023) (64,46.6805) };
    \addlegendentry{algo=pigz}

    \addplot coordinates { (1,265.016) (2,510.297) (4,517.352) (8,514.883) (16,516.418) (32,517.665) (64,515.713) };
    \addlegendentry{algo=pxz}
    
    \addplot coordinates { (1,299.034) (2,400.788) (4,412.371) (8,360.195) (16,425.573) (32,423.612) (64,429.314) };
    \addlegendentry{algo=rec\_shard}
    \legend{}
    
  \end{groupplot}
\end{tikzpicture}

\vspace{.5cm}
\begin{centering}
  \ref*{leg:throughput}
\end{centering}
  \caption{Space-overhead (\%) of the parallel block tree construction algorithms and the parallel general purpose compression algorithms (dashed lines) in a strong scaling experiment. Missing data points refer to no heap allocation.}
  \label{fig:memory}
\end{figure*}
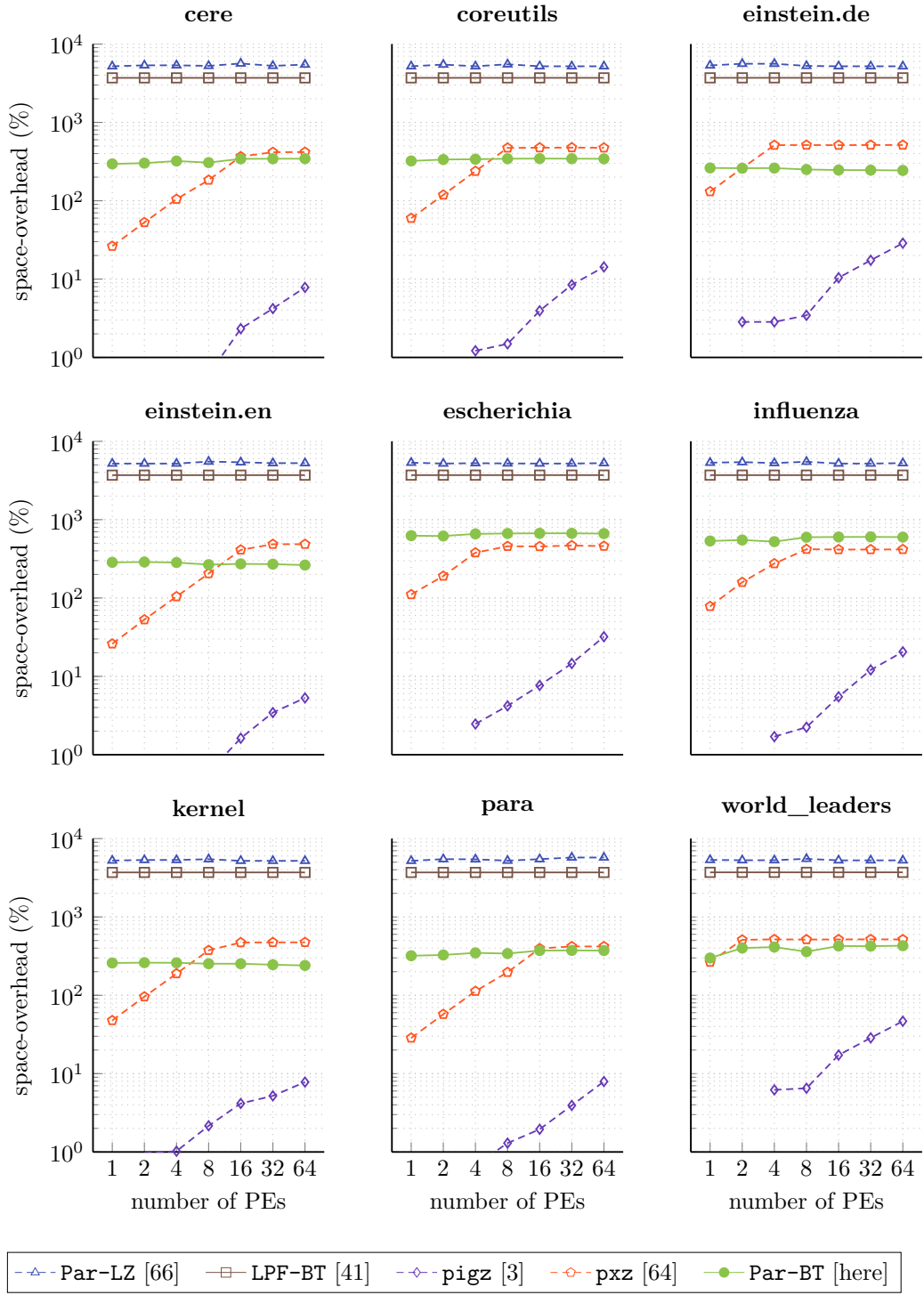

\subsection{Scalability}
\label{sec:evaluation_scalability}
We first consider the scalability of all implementations.
To this end, we conducted a strong scaling experiment where we use one physical CPU core per PE\@.
The throughput, i.e., processed input in MiB per second is depicted in \cref{fig:throughput}.

Overall, all tested algorithms behave similarly on all inputs, independently of the input size, alphabet size, and average Lempel-Ziv factor length.
We included \parlz\ to show the difficulty of efficiently computing the LZ77 factorization in parallel.
Therefore, it comes at no surprise that \parlz\ is the slowest of all tested algorithms.
Interestingly, it scales very similar to \parbt, despite it computing an LZ77 approximation and additional information required to answer queries instead of the LZ77 factorization.

Most notably \pigz\ is the fastest and best scaling algorithm overall.
However, this comes at the cost of compression ratio compared to \pxz, the third general purpose compression algorithms we include in our benchmark.
Overall, \pxz achieves speed similar to \parbt.
On some instances it is faster, on some ov very similar speed, and on some slower.
Surprisingly, it does not scale as well as \pigz.
We want to highlight again that these algorithms do not and cannot easily be used for block tree construction and data compressed using them does not provide any query support.

Currently, \lpfbt\ is the state-of-the-art block tree construction algorithm when it comes to construction speed.
The authors of \lpfbt\ also present a naive parallelization of their algorithm, which is---to the best of our knowledge---the only other parallel block tree construction algorithm.
Their parallelization is based on known parallel algorithms for basic text compression data structures.
Since only the construction of these data structures and not the whole construction is parallelized, \lpfbt\ does not scale very well, i.e., there is no more speedup when using more than 8 PEs.

Our new algorithm \parbt\ on the other hand achieves speedup of up to 15.5 (on einstein.de) andn of 11.1 on average using 64 PEs.
While this by no means a perfect speedup it is what we can expect regarding the underlying problem (LZ77 factorization) and the scalability of algorithms for that problem (\parlz).

Note that even when using one PE, the memory throughput that we are able to achieve is significantly greater than the throughput we achieve when constructing block trees.
To be more precise, the throughput is always greater than 1 GiB/s when using one thread, which is multiple order of magnitude greater than the construction throughput.
Hence, the additional memory bandwidth when using multiple threads should be of no benefit and the algorithms are CPU-bound.

\subparagraph*{COST.}
The configuration that outperforms a single thread (COST)~\cite{McSherryIM2015Cost} is the number of PEs necessary for the parallel algorithm to be faster than the fastest sequential algorithm.
When not analyzing the COST of a parallel algorithms one could simply use a very slow algorithm with great scalability and provide impressive speedups.
In our case, the COST of \parbt\ is 2, i.e., two PEs are sufficient to be faster than the fastest sequential block tree construction algorithm (\lpfbt\ with one PE).
This is the best we can achieve since our algorithms is not the fastest sequential algorithm.

\subsection{Memory Requirements}
\label{sec:memory_requirements}
In addition to the construction speed, we are also interested in the amount of memory required during construction, as requiring too much working memory seriously hinders the applicability of the algorithms in scenarios where memory is scarce, e.g., in HPC clusters or on shared hardware in general.
We look at therein memory allocated on the heap, in percent of the input size in \cref{fig:memory}.
To the best of our knowledge, no implementation allocates significant amounts of memory on the stack, hence only considering the heap is sufficient.

While being very fast, \lpfbt\ requires a lot of memory, as it utilizes a lot of toolbox data structures that need \(O(n\log n)\) bits for a length-\(n\) input.
The same is true for \parlz.
Interestingly, \lpfbt\ requires the same amount of working memory independently of the number of PEs used.
As expected, \parbt\ requires less space than \lpfbt; its space-overhead is around an order of magnitude less than \parbt's.
On all inputs, \parlz\ requires the most memory during construction.
Both \pigz\ and \pxz\ are very memory efficient but require working memory proportional to the number of PEs.
Most notably, for low PE counts, \pigz\ does not require any heap allocations.
However, as mentioned before, this also results in worse compression ratios compared to \pxz.

There are slight differences in space-overhead by \parbt\ depending on the number of PEs.
This can be explained by the local hash tables, which are filled depending on the fingerprints of the blocks.
Since our hash function is only pseudo-random, we cannot guarantee a perfect partitioning of all incoming fingerprints.
However, as we can see, it is sufficient to reasonably balance hash table sizes.

\section{Conclusion and Future Work}
\label{sec:conclusion_and_future_work}
In this paper, we present the first dedicated parallel block tree construction algorithms.
Our algorithms provide a good trade-off between required memory during construction and scalability.
The C++ implementation is the best scaling block tree construction algorithm.
Additionally, as soon as we use two threads, our implementation is faster than the state-of-the-art sequential algorithm.
We achieve all this while requiring around 10 times less working memory than our competitors during construction.

Still, the block tree is not yet a drop-in replacement for wavelet trees in all scenarios.
While block trees compress well on highly repetitive inputs, they are at most as small as wavelet trees on non-repetitive inputs.
We will tackle this problem by using better compression and more space-efficient rank and select data structures for bit vectors, e.g.,~\cite{KurpiczRS2025TheoryMeetsPracticeRankAndSelect}.
Furthermore, there is still a huge space-overhead for auxiliary data structures, when rank and select queries have to be answered.
In the future, we plan to tackle these problems and making block trees strictly superior to wavelet trees.

\bibliography{literature}

\end{document}